\documentclass[journal,twoside,web]{ieeecolor}

\usepackage{generic}
\usepackage{cite}
\usepackage{amsmath,amssymb,amsfonts}
\usepackage{algorithmic}
\usepackage{graphicx}
\usepackage{algorithm,algorithmic}
\usepackage{hyperref}
\hypersetup{hidelinks}
\usepackage{textcomp}
\usepackage{empheq}
\usepackage{microtype}

\usepackage{bm}

\def\BibTeX{{\rm B\kern-.05em{\sc i\kern-.025em b}\kern-.08em
    T\kern-.1667em\lower.7ex\hbox{E}\kern-.125emX}}
\newtheorem{thm}{\it Theorem}
\newtheorem{lem}{\it Lemma}
\newtheorem{asmp}{\it Assumption}
\newtheorem{problem}{\it Problem}
\newtheorem{remark}{\it Remark}
\newtheorem{corol}{\it Corollary}
\newtheorem{prop}{\it Proposition}
\newtheorem{define}{\it Definition}

\begin{document}
\title{Approximate Simulation-based Hierarchical Control of Nonlinear Systems}
\author{Zirui Niu, \IEEEmembership{Graduate Student Member, IEEE}, Antoine Girard, \IEEEmembership{Fellow, IEEE}, \\ Giordano Scarciotti, \IEEEmembership{Senior Member, IEEE}, 
\thanks{Z. Niu and G. Scarciotti are with the Department of Electrical and Electronic Engineering, Imperial College London, SW7 2AZ, London, U.K. {\tt\small [zn120, gs3610]@ic.ac.uk} }
\thanks{A. Girard is with Université Paris-Saclay, CNRS, CentraleSupélec, Laboratoire des Signaux et Systèmes, 91190, Gif-sur-Yvette, France (e-mail: 
antoine.girard@centralesupelec.fr). } 
}
\maketitle

\begin{abstract}
Controlling complex dynamical systems to satisfy sophisticated specifications remains a significant challenge in modern engineering. A promising approach to this problem is the approximate simulation-based hierarchical control (ASHC) technique. In this method, a simplified representation of the complex system, called the abstract system, is first designed and controlled. An interface function is then designed to translate the control law into the input of the complex system, thereby achieving approximate control synthesis. However, most existing results in ASHC are only for linear systems. This paper proposes a constructive method for solving the ASHC problem for nonlinear systems. To this end, we propose invariance equation-based methods to achieve the two classical requirements of the ASHC technique, namely the bounded output discrepancy and the $m$-relation. We then study the solvability conditions of the problem and summarise the overall design procedures. We illustrate the results with a practical example, providing step-by-step solutions to the ASHC problem of a DC-to-DC Ćuk converter.
\end{abstract}

\begin{IEEEkeywords}
Approximate abstractions, nonlinear systems, simulation functions, Lyapunov functions, model reduction
\end{IEEEkeywords}

\section{Introduction}\label{sec:intro}
\IEEEPARstart{M}{odern} engineering applications, such as robotics, autonomous vehicles, aeronautics, and manufacturing, increasingly rely on complex systems (\textit{e.g.}, nonlinear and/or higher-order) to enable higher levels of autonomy and precision. The inherent complexity of such models, however, imposes substantial computational demands on analysis and control design, thereby incurring significant processing time and energy expenditure in simulations and hardware implementations. Aiming to handle sophisticated specifications with such complexities can even result in intractable computations. Such a problem is intrinsically multidisciplinary, and it is dealt with in different ways by different research communities. A notable framework to solve this problem is model order reduction (MOR), see, \textit{e.g.},~\cite{ref:antoulas2005approximation}  for linear systems and~\cite{ref:scarciotti2024interconnection} for nonlinear systems. Another family of approaches, which is particularly indicated when the key focus of the simplification is control design, is based on hierarchical control. This is a framework introduced by Girard and Pappas~\cite{ref:girard2009hierarchical} based on the notion of approximate simulation, and hereafter referred to as approximate simulation-based hierarchical control (ASHC). As illustrated in Fig.~\ref{fig:Hierarchical_Structure}, given a complex system to be controlled, referred to as the \textit{concrete system}, a simpler model is constructed, termed the \textit{abstract system} or \textit{abstraction}. An interface function is then designed to translate the control inputs synthesised for the abstraction into inputs applicable to the concrete system, thereby enabling approximate control of the concrete system while providing guarantees on the output error. While~\cite{ref:girard2009hierarchical} only proposed solutions for linear systems, the approach was then extended to more complicated scenarios such as networked systems~\cite{ref:zamani2017compositional,ref:rungger2016compositional,ref:smith2020approximate}, piecewise
affine systems~\cite{ref:song2022robust}, linear stochastic hybrid systems~\cite{ref:zamani2016approximations}, and cyber-physical systems~\cite{ref:zhong2024hierarchical}.

\begin{figure}[tbp]
\begin{centering}
    \includegraphics[width=0.75\linewidth]{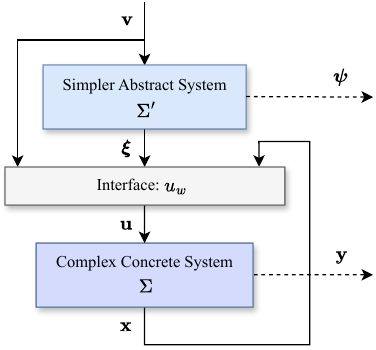}
    \caption{Hierarchical control system architecture. }
    \label{fig:Hierarchical_Structure}
\end{centering}
\end{figure}

However, there is no constructive way to solve the ASHC problem for general nonlinear systems. Solutions available in the existing literature only apply to special cases. For example, Fu et al.~\cite{ref:fu2013hierarchical} solved the ASHC problem of differentially flat nonlinear systems by transforming the nonlinear system into a linear model and proceeding with the linear method in~\cite{ref:girard2009hierarchical}. Zamani and Arcak~\cite{ref:zamani2017compositional} solved the ASHC problem for Lur'e-type systems, building upon their compositional ASHC solutions for networked systems. These existing results cannot be extended to the general nonlinear cases. It is also worth mentioning that~\cite{ref:tabuada2008approximate} proposed an approximate simulation approach that constructs a finite transition system from the concrete system for symbolic control synthesis. However, by relying on grid-based discretisation, this method suffers from the curse of dimensionality, rendering it computationally expensive for higher-dimensional systems, while also obscuring the physical intuition of the underlying continuous dynamics. Therefore, the ASHC of nonlinear systems remains an open problem.

In this paper, we solve the ASHC problem for general nonlinear systems, with a particular focus on input-affine systems. The classical ASHC problem typically requires two key design criteria, namely the bounded output discrepancy property and the so-called \textit{``$\mathit{\Pi}$-relation''}. The bounded output discrepancy requires that the error between the outputs of the concrete and abstract systems is bounded by a known function called \textit{simulation function}, originated from~\cite{ref:girard2007approximation}. Furthermore, the $\Pi$-relation requires that the construction of an abstract system preserves the input-output controllability properties of the concrete system. In this paper, we address both requirements in the nonlinear setting. First, under certain stabilisability and Lipschitz continuity assumptions, we characterise the bounded output discrepancy requirement via a partial differential equation (PDE). Then we generalise the linear $\Pi$-relation to an $m$-relation for nonlinear systems and characterise the requirement through another PDE. By combining the two PDE-based characterisations, we establish the overall solvability condition and present a detailed procedure for addressing the ASHC problem. 
Finally, we demonstrate the approach by applying the results step by step to a DC-to-DC Ćuk converter.

The main contribution of this paper is to provide constructive methods for solving ASHC problems for nonlinear systems. It is worth mentioning that, unlike MOR techniques that focus on open-loop trajectory accuracy, see, \textit{e.g.},~\cite{ref:antoulas2005approximation,ref:schilders2008model,ref:scarciotti2024interconnection} and references therein, the ASHC technique focuses on control design. However, since~\cite{ref:niu2025Briging} established a conceptual bridge between the ASHC technique and the moment-matching-based MOR technique, the ASHC technique can also be regarded as a model order reduction method for control design. From this perspective, although several studies exist, \textit{e.g.},~\cite{ref:obinata2012model,ref:meijer2022finite,ref:jansen2017use}, most of the available results are confined to linear systems with certain assumptions on system/controller structures. Therefore, the contribution of this work can be alternatively viewed as a novel approach to employing reduced-order models for controlling nonlinear systems. 


The remainder of this paper is organised as follows. Section~\ref{sec:prelimiaries} recalls some preliminary notions. Section~\ref{sec:formulate} formulates the ASHC problem by extending its two key classical requirements to the nonlinear setting. Then, the solution to the ASHC problem is provided in Section~\ref{sec:solution}, which first addresses each key requirement individually and then outlines the overall design procedures to satisfy both requirements. 
An illustrative example based on the DC-to-DC Ćuk converter, along with step-by-step solutions, is presented in Section~\ref{sec:example}, which is followed by some concluding remarks in Section~\ref{sec:concl}.

\textbf{Notation.} The symbols $\mathbb{R}$, $\mathbb{R}_{\geq 0}$, and $\mathbb{R}_{> 0}$ denote the set of real numbers, non-negative real numbers and positive real numbers, respectively. The symbol $I_{n}$ indicates an $n$-dimensional identity matrix and the symbol $\bm{0}_{m \times n}$ represents a zero matrix in $\mathbb{R}^{m \times n}$. 
The superscript $\top$ denotes the transposition operator and the superscript $\dagger$ denotes the Moore-Penrose pseudoinverse operator. Given any vector $x \in \mathbb{R}^n$ with a positive definite matrix $Q \in \mathbb{R}^{n \times n}$, $\|x\|$ indicates its Euclidean norm and $\|x\|_Q$ represents its weighted norm $\sqrt{x^\top Q x}$. Given a bounded function $\mathbf{x}: \mathbb{R} \rightarrow \mathbb{R}^{n}$, $\|\mathbf{x}\|_{\infty} \coloneqq \sup _{t \geq 0}\|\mathbf{x}(t)\|$ denotes its $\mathcal{L}^{\infty}$ norm. A function $\gamma: \mathbb{R}_{\geq 0} \rightarrow \mathbb{R}_{\geq 0}$ belongs to class $\mathcal{K}$ if it is continuous, strictly increasing, and satisfies $\gamma(0)=0$. A function $\gamma$ is said to belong to class $\mathcal{K}_\infty$ if $\gamma \in \mathcal{K}$ and $\gamma(r) \to +\infty$ as $r \to +\infty$. A continuous function $\beta: \mathbb{R}_{\geq 0} \times \mathbb{R}_{\geq 0} \rightarrow \mathbb{R}_{\geq 0}$ belongs to class $\mathcal{KL}$ if, for each fixed $t$, the function $\beta(\cdot, t)$ belongs to class $\mathcal{K}$ and, for each fixed $r$, $\beta(r, \cdot)$ is strictly decreasing and satisfies $\lim_{t \to +\infty}\beta(r, t) = 0$. Given any function $f: \mathbf{N} \to \mathbf{M}$ for some set $\mathbf{N} \subseteq \mathbb{R}^n$ and $\mathbf{M} \subseteq \mathbb{R}^m$, the image (also called range) of $f$ is defined as $\operatorname{Im}(f) := \{f(x) \, | \, x \in \mathbf{N}\} \subseteq \mathbf{M}$, while the kernel of $f$ is denoted by $\operatorname{Ker}(f) := \{x \in \mathbf{N} \,|\, f(x) = 0 \} \subseteq \mathbf{N}$.

\section{Preliminaries}\label{sec:prelimiaries}
In this section, we formulate the nonlinear systems studied in this paper, and we recall the definition of stabilisability for nonlinear systems.

\subsection{Nonlinear Control Systems}
The class of continuous-time nonlinear control systems to be studied is defined as follows.
\begin{define}\label{def:ControlSystem}
    A control system $\Sigma$ is a tuple $\Sigma= \left(\mathbb{R}^n, \mathbf{U}, \mathcal{U}, f, \mathbb{R}^{p}, h\right)$, where $\mathbb{R}^n, \mathbb{R}^p$ are the state and output spaces, respectively, and
    \begin{itemize}
        \item $\mathbf{U} \subseteq \mathbb{R}^{m}$ is the input space containing the origin.
        \item $\mathcal{U}$ is a set of all measurable functions of time, from some open interval $\ell \subseteq \mathbb{R}$ to $\mathbf{U}$.
        \item $f: \mathbb{R}^n \times \mathbb{R}^m \rightarrow \mathbb{R}^n$ is a locally Lipschitz continuous mapping, \textit{i.e.}, given any compact set $\mathbf{X}_l \subset \mathbb{R}^n$, there exists a constant $L_f \in \mathbb{R}_{> 0}$ such that $\|f(x, u)-f(z, u)\| \leq L_f\|x-z\|$ for all $x, z \in \mathbf{X}_l$. 
        \item $h: \mathbb{R}^n \rightarrow \mathbb{R}^{p}$ is the output mapping.\hfill$\blacksquare$
    \end{itemize}
\end{define}

Then for any open interval $\ell \subseteq \mathbb{R}$, we say that a locally absolutely continuous function $\mathbf{x}(t) \in \mathbb{R}^n$ and a curve $\mathbf{y}(t) \in \mathbb{R}^p$ are the \textit{state trajectory} and the \textit{output trajectory} of system~$\Sigma$ if there exists an input function $\mathbf{u} \in \mathcal{U}$ satisfying
\begin{equation}\label{equ:ConcreteSystem}
    \Sigma: 
    \left\{\begin{array}{rl}
        \dot{\mathbf{x}} \!\!\!\!&= f(\mathbf{x}, \mathbf{u}),   \\
         \mathbf{y} \!\!\!\!&= h(\mathbf{x}), 
    \end{array}\right.
\end{equation}
for almost all $t \in \ell$. Throughout this paper, we assume that all considered control systems of the form~$\Sigma$ are forward complete~\cite{ref:angeli1999forward}, \textit{i.e.},
given any state trajectory with initial condition $\mathbf{x}(t_0) = x_0 \in \mathbb{R}^n$ and controlled by any input $\mathbf{u} \in \mathcal{U}$, there exists a function $\Phi_f^x: \mathbb{R} \times \mathbb{R} \times \mathbb{R}^n \times \mathbb{R}^m \rightarrow \mathbb{R}^n$ such that
$$
\mathbf{x}(t)=\Phi_f^x\left(t, t_0, x_0, \mathbf{u}_{[t_0, t)}\right)
$$
for all $t \in [t_0, +\infty)$. The uniqueness of $\Phi_f^x(t_f, t_0, x_0, \mathbf{u})$ is guaranteed by the Lipschitz property of $f$, see~\cite{ref:sontag2013mathematical}.


From now on, we indicate with~$\Sigma = \left(\mathbb{R}^n, \mathbf{U}, \mathcal{U}, f, \mathbb{R}^{p}, h\right)$ the concrete system, \textit{i.e.}, the complex nonlinear system that we want to control. We indicate with~$\Sigma^\prime = \left(\mathbb{R}^{\hat{n}}, \mathbb{R}^{\hat{m}}, \mathcal{V}, \phi, \mathbb{R}^{p}, \kappa\right)$ the (simpler) abstract system, which has state and output trajectories satisfying
\begin{equation}\label{equ:AbstractSystem}
\Sigma^{\prime}:
\left\{\begin{array}{rl}
\dot{\bm{\xi}}\!\!\!\!& = \phi(\bm{\xi}, \mathbf{v}), \\
\bm{\psi}\!\!\!\!& = \kappa(\bm{\xi}),
\end{array}\right.
\end{equation}
for some input $\mathbf{v} \in \mathcal{V}$. Note that both systems~$\Sigma$ and $\Sigma^{\prime}$ have the same observation space (\textit{i.e.}, $\mathbb{R}^{p}$), but may possess different state and input spaces. 


\subsection{Notion of Stabilisability}
We now recall some stabilisability conditions that will be used later in this paper. For brevity, denote $\Delta := \left\{(x, z) \in \mathbb{R}^n \times \mathbb{R}^n \mid x=z\right\}$ as a diagonal set on $\mathbb{R}^{2n}$. Then we have the following definition.

\begin{define}[Asymptotic Stabilisability~\cite{ref:tabuada2008approximate}]\label{def:stabilisab}
A control system~$\Sigma = \left(\mathbb{R}^n, \mathbf{U}, \mathcal{U}, f, \mathbb{R}^{p}, h\right)$ is said to be \textit{asymptotically stabilisable\footnote{The property was originally called Stabilizability Assumption II in~\cite[Definition 2.5]{ref:tabuada2008approximate}.}} if there exists a function $k: \mathbb{R}^n \times \mathbb{R}^n \times \mathbf{U} \rightarrow \mathbf{U}$ satisfying:
\begin{itemize}
    \item [i)] $k$ is continuously differentiable on $\mathbb{R}^{2 n} \backslash \Delta$;
    \item [ii)] $k(z, x, u)=u$ for all $(x, z) \in \Delta$;
\end{itemize}
and rendering the system $(\dot{\mathbf{x}}, \dot{\mathbf{z}}) \!=\! \left(f \times_k f\right)((\mathbf{x}, \mathbf{z}), \mathbf{u})$ where
\begin{equation}\label{equ:funcff}
   \left(f \times_k f\right)((x, z), u):=(f(x, u), f(z, k(z, x, u))) 
\end{equation}
uniformly globally asymptotically stable with respect to $\Delta$, that is, enforcing that for all $x_0, z_0 \in \mathbb{R}^n$, $\ell = [t_0, +\infty)$ with $t_0 \in \mathbb{R}$, $\mathbf{u} \in \mathcal{U}$, and $t \in \ell$, there exists a class $\mathcal{K}\mathcal{L}$ function $\beta_k$ satisfying
\begin{equation}\label{equ:trajConverge}
    \|\Phi_f^x(t, t_0, x_0, \mathbf{u}_{[t_0, t)}\!)-\Phi_f^z(t, t_0, z_0, \mathbf{u}_{[t_0, t)}^*\!)\| \!\leq\! \beta_k(\|x_0 - z_0\|, t),
\end{equation}
where the input function $\mathbf{u}^*(t) := k(\mathbf{z}(t), \mathbf{x}(t), \mathbf{u}(t)) = k(\Phi_f^z(t, t_0, z_0, \mathbf{u}_{[t_0, t)}^*), \Phi_f^x(t, t_0, x_0, \mathbf{u}_{[t_0, t)}), \mathbf{u}(t))$.
\hfill$\blacksquare$
\end{define}

By Definition~\ref{def:stabilisab}, a control system~$\Sigma$ is asymptotic stabilisable if for any state trajectory $\mathbf{x}$ controlled by an input $\mathbf{u}$, there exists a feedback control law $\mathbf{u}^* = k(\mathbf{z}, \mathbf{x}, \mathbf{u})$ such that another state trajectory $\mathbf{z}$ starting with any initial condition $z_0$ asymptotically converges to $\mathbf{x}$. This property can be checked by a Lyapunov criterion as follows.

\begin{thm}\label{thm:StabiliseLyap}\cite[Proposition 2.7]{ref:tabuada2008approximate}
A control system $\Sigma=\left(\mathbb{R}^n, \mathbf{U}, \mathcal{U}, f, \mathbb{R}^{p}, h\right)$ is asymptotically stabilisable if and only if there exist:
\begin{itemize}
    \item[i)] a function $k: \mathbb{R}^n \times \mathbb{R}^n \times \mathbf{U} \rightarrow \mathbf{U}$ satisfying:
    \begin{itemize}
    \item[a)] $k$ is continuously differentiable on $\mathbb{R}^{2 n} \backslash \Delta$;
    \item[b)] $k(z, x, u)=u$ for all $(x, z) \in \Delta$,
    \item[c)] $f \times_k f$ defined in~(\ref{equ:funcff}) is forward complete.
    \end{itemize}
    \item[ii)] a smooth function $V: \mathbb{R}^n \times \mathbb{R}^n \rightarrow \mathbb{R}_{\geq 0}$ and class $\mathcal{K}_{\infty}$ functions $\underline{\alpha}, \bar{\alpha}, \alpha$ for which the following inequalities hold for all $x, z \in \mathbb{R}^n$ and $u \in \mathbf{U}$:
    \begin{subequations}\label{equ:stabilisableV}
        \begin{align}
            \!\!\underline{\alpha}(\|x-z\|) &\leq V(x, z) \leq \bar{\alpha}(\|x-z\|), \label{equ:stabilisableCond1} \\
            \!\!\frac{\partial V}{\partial x} f(x, u) + \frac{\partial V}{\partial z}& f(z, k(z, x, u)) \leq-\alpha(\|x-z\|). \label{equ:stabilisableCond2}
        \end{align}
    \end{subequations}
\end{itemize}
\end{thm}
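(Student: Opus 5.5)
The plan is to treat this as a converse-Lyapunov statement for the augmented system $(\dot{\mathbf{x}},\dot{\mathbf{z}}) = (f\times_k f)((\mathbf{x},\mathbf{z}),\mathbf{u})$, with $\mathbf{u}$ acting as a disturbance. Stability is measured with respect to the closed set $\Delta$. The two directions are handled separately, and the same feedback $k$ serves in both.

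\emph{Sufficiency.} Assume i) and ii). Fix $x_0,z_0$ and $\mathbf{u}\in\mathcal{U}$, and let $(\mathbf{x},\mathbf{z})$ be the closed-loop solution, which is defined on $[t_0,+\infty)$ by i)c). Along it, (\ref{equ:stabilisableCond2}) gives
\[
\dot V \le -\alpha(\|\mathbf{x}-\mathbf{z}\|) \le -\alpha\circ\bar\alpha^{-1}(V)
\]
for almost all $t$, using (\ref{equ:stabilisableCond1}). The standard comparison lemma then yields a class $\mathcal{KL}$ function $\sigma$ with $V(t)\le \sigma(V(t_0),t-t_0)$, where $\sigma$ depends only on $\alpha\circ\bar\alpha^{-1}$. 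Applying (\ref{equ:stabilisableCond1}) again gives
\[
\|\mathbf{x}(t)-\mathbf{z}(t)\| \le \underline{\alpha}^{-1}\bigl(\sigma(\bar\alpha(\|x_0-z_0\|),t-t_0)\bigr) =: \beta_k(\|x_0-z_0\|,t-t_0).
\]
This bound is uniform in $\mathbf{u}$ and $t_0$, which is exactly (\ref{equ:trajConverge}). Properties i)a) and i)b) are inherited directly from $k$.

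\emph{Necessity.} Assume $k$ is given by Definition~\ref{def:stabilisab}. Properties i)a) and i)b) hold by definition. Forward completeness in i)c) follows from two facts: $\mathbf{x}$ is complete because $\Sigma$ is forward complete, and $\|\mathbf{z}-\mathbf{x}\|$ is bounded by $\beta_k(\|x_0-z_0\|,0)$, so $\mathbf{z}$ cannot escape in finite time.

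For ii), view the closed loop as a system with state $w=(x,z)$ and disturbance $\mathbf{u}\in\mathcal{U}$. This system is uniformly globally asymptotically stable with respect to the closed, though noncompact, set $\Delta$, uniformly in the disturbance, and the distance to $\Delta$ is proportional to $\|x-z\|$. I would invoke the converse Lyapunov theorem for robust UGAS with respect to closed sets, due to Lin, Sontag and Wang (1996). It yields a smooth $V$ satisfying the sandwich bounds with respect to $|w|_\Delta$, together with a decrease $-\alpha(|w|_\Delta)$ for all disturbance values. Rescaling the $\mathcal{K}_\infty$ functions to absorb the factor $1/\sqrt{2}$ gives (\ref{equ:stabilisableV}).

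\emph{Main obstacle.} The hard step is checking the hypotheses of that converse theorem. It requires a locally Lipschitz vector field and a compact disturbance set. Here $k$ is only $C^1$ off $\Delta$, and $\mathbf{U}$ may be unbounded. I would first try to use the fact that $k$ equals $u$ on $\Delta$, so the closed loop agrees with $(f(x,u),f(x,u))$ there, and argue local Lipschitzness of the closed-loop field in a neighbourhood of $\Delta$. To handle an unbounded $\mathbf{U}$, I would treat the input as a measurable time-varying signal, noting that (\ref{equ:trajConverge}) is uniform over all such signals. If needed, I would smooth $V$ by convolution away from $\Delta$. Since the statement is quoted from \cite[Proposition 2.7]{ref:tabuada2008approximate}, I would ultimately defer these technical conditions to that reference.
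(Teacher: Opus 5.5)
Your sufficiency argument is fine. It works if you read the definition's $\beta_k(\cdot,t)$ as $\beta_k(\cdot,t-t_0)$, which is the only consistent reading for arbitrary $t_0$. Note that the paper gives no proof of this statement; it only quotes \cite[Proposition 2.7]{ref:tabuada2008approximate}.

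The gap is in the necessity direction, whose entire content is the converse-Lyapunov step. The two hypotheses you single out, local Lipschitz continuity of the closed-loop field and compactness of the disturbance set, do not follow from Definition~\ref{def:stabilisab}. The patches you suggest do not supply them.

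\emph{Regularity of $k$.} The definition only asks that $k$ be $C^1$ on $\mathbb{R}^{2n}\setminus\Delta$ and equal to $u$ on $\Delta$. It imposes no regularity across $\Delta$, and agreement on a set gives no control transversally to it, so ``$k=u$ on $\Delta$'' cannot yield Lipschitzness near $\Delta$. Take $n=1$, $\mathbf{U}=\mathbb{R}$, $f(x,u)=u$ and $k(z,x,u)=u-\operatorname{sign}(z-x)|z-x|^{1/2}$, or even $u-\operatorname{sign}(z-x)$. The error $e=z-x$ then obeys $\dot e=-\operatorname{sign}(e)|e|^{1/2}$ (resp. $\dot e=-\operatorname{sign}(e)$), independently of $\mathbf{u}$. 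Solutions are forward unique and reach $0$ in finite time, so Definition~\ref{def:stabilisab} holds. Yet the closed-loop field is not locally Lipschitz (resp. not continuous) at any point of $\Delta$. Here $V=(x-z)^2$ happens to work, so this is not a counterexample to the statement, but Lin--Sontag--Wang cannot be invoked to produce it.

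\emph{Compactness of $\mathbf{U}$.} Lin--Sontag--Wang already quantify over measurable disturbance signals, so re-reading $\mathbf{u}$ as a signal changes nothing. What compactness of the value set provides is bounds on the field, and Lipschitz constants, uniform in $u$ on compact state sets. Their proof uses these to show the sup-type Lyapunov function is locally Lipschitz, and then to smooth it while keeping the decrease for every $u$. The uniformity of \eqref{equ:trajConverge} over signals is just the UGAS hypothesis itself and supplies none of this. Even the special case $k(z,x,u)=u$, Angeli's incremental-GAS characterization \cite{ref:angeli2002lyapunov}, is stated for a compact set of input values.

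So as it stands you have proved the ``if'' part and reduced the ``only if'' part to a theorem whose hypotheses are not available. To close it, you have two options:
\begin{itemize}
\item Add assumptions under which Lin--Sontag--Wang applies verbatim to $w=(x,z)$ with $|w|_\Delta=\|x-z\|/\sqrt{2}$. These are: compact $\mathbf{U}$, and $(x,z,u)\mapsto f(z,k(z,x,u))$ continuous and locally Lipschitz in $(x,z)$ on all of $\mathbb{R}^{2n}$, uniformly in $u$.
\item Defer the whole direction to the precise hypotheses of the cited Proposition~2.7, which is all the paper does. In that case, present it as a citation and drop the two patches rather than offering them as ways around the obstacle.
\end{itemize}
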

\hfill$\blacksquare$

Instead of checking~(\ref{equ:trajConverge}), the Lyapunov criterion in Theorem~\ref{thm:StabiliseLyap} provides a constructive way of characterising the asymptotic stabilisability property. Note also that, by comparing this property with the incremental stability in~\cite{ref:angeli2002lyapunov}, any incrementally stable system is asymptotically stabilisable with $k(z, x, u) = u$ for all $(x, z) \in \mathbb{R}^{2n}$. In other words, asymptotic stabilisability is a less restrictive condition than incremental stability.

\section{ASHC Problem: Formulation}\label{sec:formulate}
In this section, we formulate the ASHC problem for the class of nonlinear systems~$\Sigma = \left(\mathbb{R}^n, \mathbf{U}, \mathcal{U}, f, \mathbb{R}^{p}, h\right)$. To this end, we first need to introduce some definitions and initial results that allow us to extend the two key design requirements of the classical ASHC problem, namely the bounded output discrepancy and the $m$-relation.


\subsection{Requirement 1: Bounded Output Discrepancy}\label{subsec:BODintro}
To provide guarantees on the trajectories of the concrete system when implementing the ASHC technique shown in Fig.~\ref{fig:Hierarchical_Structure}, one key requirement is that the output discrepancy of the two systems~$\Sigma$ and $\Sigma^{\prime}$ is bounded through a so-called ``simulation function'', which was originally introduced by~\cite{ref:girard2007approximation} as a generalisation of the concept of simulation relation, see~\cite{ref:milner1989communication,ref:pappas2000hierarchically,ref:tabuada2005hierarchical}. Now we define the simulation function introduced in~\cite{ref:girard2009hierarchical} with slight modifications.

\begin{define}[Simulation Function]\label{def:simuFunction}
Consider systems~$\Sigma$ and~$\Sigma^{\prime}$. Given a set $\mathbf{V} \subseteq \mathbb{R}^{\hat{n}}$, a continuously differentiable function $W: \mathbf{V} \times \mathbb{R}^n \rightarrow \mathbb{R}_{\geq 0}$ is called a \textit{local simulation function} of system~$\Sigma^{\prime}$ to system~$\Sigma$ and a function $u_{w} : \mathbf{V} \times \mathbb{R}^n  \times \mathbb{R}^{\hat{m}} \rightarrow \mathbf{U}$ is an associated \textit{local interface} if there exist functions $\alpha_h$, $\eta \in \mathcal{K}_\infty$ and $\gamma \in \mathcal{K}$ such that 
\begin{equation}\label{equ:SimuFuncCondi1}
    W(\xi, x) \geq \alpha_h\left(\|\kappa(\xi) - h(x)\|\right),
\end{equation}
and
\begin{equation}\label{equ:SimuFuncCondi2}
    \nabla W(\xi, x)^T\left[\begin{array}{c}\phi(\xi, v) \\
    f(x, u_w(\xi, x, v))
\end{array}\right] < -\eta\left(W(\xi, x)\right) + \gamma(\|v\|),
\end{equation}
for all $(\xi, x) \in \mathbf{V} \times \mathbb{R}^n$ and $v \in \mathbb{R}^{\hat{m}}$. 
\hfill$\blacksquare$
\end{define}

There are two main differences between Definition~\ref{def:simuFunction} and~\cite[Definition 1]{ref:girard2009hierarchical}. On the one hand, Definition~\ref{def:simuFunction} is a local version of~\cite[Definition 1]{ref:girard2009hierarchical} where $\mathbf{V} = \mathbb{R}^{\hat{n}}$. The reason for this change will be explained later. On the other hand, the functions $\alpha_h$ and $\eta$ in~\cite[Definition 1]{ref:girard2009hierarchical} are assumed to be identity mappings. Despite these changes, similarly to~\cite[Theorem 1]{ref:girard2009hierarchical}, the simulation function defined in Definition~\ref{def:simuFunction} still quantifies a bound on the error of the outputs of systems~$\Sigma$ and~$\Sigma^{\prime}$, as shown in the next result.

\begin{thm}\label{thm:ErrorBound}
    Consider the hierarchical structure in Fig.~\ref{fig:Hierarchical_Structure}. Suppose $W$ is a local simulation function from~$\Sigma^{\prime}$ to $\Sigma$ on some set $\mathbf{V} \subseteq \mathbb{R}^{\hat{n}}$ and $u_w$ is the associated local interface. Then, there exists a function $\beta \in \mathcal{K}\mathcal{L}$ such that for any measurable input function $\mathbf{v}: [t_0, +\infty) \to \mathbb{R}^{\hat{m}}$ enforcing the state trajectory $\bm{\xi}(t) \in V$ for all $t \geq t_0$, the output trajectories $\bm{\psi}$ of system~$\Sigma^\prime$ and $\mathbf{y}$ of system~$\Sigma$ satisfy
    \begin{equation}\label{equ:ErrorBound}
    \begin{aligned}
        \|\bm{\psi}(t) - \mathbf{y}(t)\| \leq  \alpha_h^{-1} & (\beta(W(\xi_0, x_0), t))\\ &\!\!\!\!\!\!+ \alpha_h^{-1}(\eta^{-1}(2\gamma(\|\mathbf{v}\|_{\infty})))
    \end{aligned}
    \end{equation}
    for all $t \geq t_0$ with any initial conditions $\bm{\xi}(t_0) = \xi_0 \in \mathbf{V}$ and $\mathbf{x}(t_0) = x_0 \in \mathbb{R}^n$ at any initial time $t_0 \in \mathbb{R}$.
\end{thm}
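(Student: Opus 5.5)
We follow the standard ISS-type comparison argument of~\cite[Theorem 1]{ref:girard2009hierarchical}, adapted to the nonlinear gains $\eta$ and $\alpha_h$. Fix an input $\mathbf{v}$ that keeps $\bm{\xi}(t)\in\mathbf{V}$ for all $t\ge t_0$. Let $\mathbf{x}$ be driven by the interface $\mathbf{u}(t)=u_w(\bm{\xi}(t),\mathbf{x}(t),\mathbf{v}(t))$, as in Fig.~\ref{fig:Hierarchical_Structure}, and set $w(t):=W(\bm{\xi}(t),\mathbf{x}(t))$. Since $(\bm{\xi},\mathbf{x})$ is absolutely continuous and $W$ is $C^1$, $w$ is absolutely continuous. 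Along the trajectory, (\ref{equ:SimuFuncCondi2}) gives, for almost all $t$,
\begin{equation*}
\dot w(t) \le -\eta(w(t)) + \gamma(\|\mathbf{v}(t)\|) \le -\eta(w(t)) + \gamma(\|\mathbf{v}\|_\infty).
\end{equation*}
If $\|\mathbf{v}\|_\infty=\infty$, the bound (\ref{equ:ErrorBound}) is trivial, so we assume $\mathbf{v}$ is bounded.

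The key step is to split the dissipation in half. Whenever $\eta(w)\ge 2\gamma(\|\mathbf{v}\|_\infty)$, we have $\dot w\le -\tfrac12\eta(w)$. Let $c:=\eta^{-1}(2\gamma(\|\mathbf{v}\|_\infty))$, which is well defined because $\eta\in\mathcal{K}_\infty$. The set $\{w\le c\}$ is forward invariant: at any time where $w=c$, the derivative $\dot w$ is nonpositive, and a standard contradiction argument on the first exit time, using absolute continuity, excludes leaving the set. Outside this set, $w$ satisfies $\dot w\le-\tfrac12\eta(w)$. The comparison lemma for scalar ODEs (e.g.\ Khalil's Lemma 4.4, or Sontag's lemma on $\dot y=-\rho(y)$ with $\rho$ positive definite) then yields a $\beta\in\mathcal{KL}$, depending only on $\eta$, such that $w(t)\le\beta(w(t_0),t-t_0)$ for as long as $w>c$. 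Here $\beta$ is the solution map of $\dot y=-\tfrac12\eta(y)$. Combining the two regimes gives
\begin{equation*}
w(t)\le \max\{\beta(W(\xi_0,x_0),t-t_0),\,c\}\le \beta(W(\xi_0,x_0),t-t_0)+c.
\end{equation*}
The statement writes $\beta(\cdot,t)$. Since the system is time-invariant in this respect, we read it as $t-t_0$, or take $t_0=0$ without loss of generality.

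Finally, (\ref{equ:SimuFuncCondi1}) gives $\|\bm\psi(t)-\mathbf{y}(t)\|\le\alpha_h^{-1}(w(t))$. The function $\alpha_h^{-1}\in\mathcal{K}_\infty$ is not additive, so the sum must be handled with care. The cleanest route is to apply $\alpha_h^{-1}$ to the max form: $\alpha_h^{-1}(\max\{a,b\})=\max\{\alpha_h^{-1}(a),\alpha_h^{-1}(b)\}\le\alpha_h^{-1}(a)+\alpha_h^{-1}(b)$. This gives (\ref{equ:ErrorBound}) exactly. We must also note that $w(t_0)\ge0$, so the bound holds for all initial conditions in $\mathbf{V}\times\mathbb{R}^n$.

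The main obstacle is the comparison step. We need a $\mathcal{KL}$ bound from a differential inequality with a merely class-$\mathcal{K}_\infty$ decay rate, valid for a.e.\ $t$ only, and we must rigorously justify the forward invariance of the sublevel set. We will rely on the standard result that solutions of $\dot y=-\rho(y)$, with $\rho$ locally Lipschitz or just continuous positive definite, admit a $\mathcal{KL}$ bound. If $\eta$ is not locally Lipschitz, we first replace it by a smaller locally Lipschitz $\mathcal{K}_\infty$ function, which only weakens the inequality and keeps the argument valid.
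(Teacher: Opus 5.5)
Your proposal is correct and follows essentially the same route as the paper: derive the dissipation inequality along closed-loop trajectories, obtain the max-type bound $W \le \max\{\beta(W(\xi_0,x_0),\cdot),\, \eta^{-1}(2\gamma(\|\mathbf{v}\|_\infty))\}$, apply $\alpha_h^{-1}$ to the max, and bound the max by the sum. The only difference is that you prove inline the ISS-type estimate that the paper simply imports from \cite[Lemma 3.6]{ref:zamani2016approximations>}, using halved dissipation, forward invariance of $\{w\le c\}$ and scalar comparison; your reading of $\beta(\cdot,t)$ as $\beta(\cdot,t-t_0)$ is the right one for an arbitrary initial time $t_0$.
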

\begin{proof}
    For any $t \geq t_0$, the satisfaction of the inequality~(\ref{equ:SimuFuncCondi2}) implies that 
    \begin{equation*}
    \begin{aligned}
        W(\bm{\xi}(t), \mathbf{x}(t)) &= W(\xi_0, x_0) + \int_{t_0}^{t} \frac{d W(\bm{\xi}(\tau), \mathbf{x}(\tau))}{d\tau} d\tau \\
        & \leq \int_{t_0}^{t} \left(-\eta\left(W(\bm{\xi}(\tau), \mathbf{x}(\tau))\right) + \gamma(\|\mathbf{v}\|_\infty)\right) d\tau.
    \end{aligned}
    \end{equation*}
    Then~\cite[Lemma 3.6]{ref:zamani2016approximations} yields the existence of a $\mathcal{K}\mathcal{L}$ function $\beta$ such that
    \begin{equation*}
        W(\bm{\xi}(t), \mathbf{x}(t)) \leq \max \left\{ \beta(W(\xi_0, x_0), t), \eta^{-1}(2\gamma(\|\mathbf{v}\|_\infty)) \right\},
    \end{equation*}
    which, by the inequality~(\ref{equ:SimuFuncCondi1}), yields that
    \begin{equation*}
    \begin{aligned}
        \|\bm{\psi}&(t) - \mathbf{y}(t)\| \leq \alpha_h^{-1}(W(\bm{\xi}(t), \mathbf{x}(t)))\\
        &\leq \max \left\{ \alpha_h^{-1}(\beta(W(\xi_0, x_0), t)), \alpha_h^{-1}(\eta^{-1}(2\gamma(\|\mathbf{v}\|_\infty))) \right\}.
    \end{aligned}
    \end{equation*}
     This result implies the satisfaction of the inequality~(\ref{equ:ErrorBound}).
\end{proof}

Theorem~\ref{thm:ErrorBound} provides a guarantee on the discrepancy between the outputs of systems~$\Sigma$ and~$\Sigma^\prime$. This guarantee is fundamental when implementing the ASHC technique. For example, consider the task of controlling a concrete system~$\Sigma$ such that the output $\mathbf{y}(t)$ tracks a given reference trajectory $\mathbf{r}: [t_0, +\infty) \to \mathbf{Y} \subseteq \mathbb{R}^{p}$ under the constraint $\|\mathbf{r}(t) - \mathbf{y}(t)\| \leq \epsilon$ for some $\epsilon \geq 0$ and $t_0 \in \mathbb{R}$. While we assume that it is possible to control $\mathbf{y}(t) = \mathbf{r}(t)$ for all $t \geq t_0$, directly controlling system~$\Sigma$ can result in high complexity in controller design and implementation. Alternatively, one can first design a controller for an abstract system~$\Sigma^{\prime}$ with $\bm{\psi}(t) = \mathbf{r}(t)$ with some input trajectory $\mathbf{v}(t)$. Then by implementing ASHC onto system~$\Sigma$, the error constraint $\|\mathbf{r}(t) - \mathbf{y}(t)\| \leq \epsilon$ is guaranteed to be satisfied if $\mathbf{v}(t)$ is such that 
\begin{equation}\label{equ:TrackConst}
    \alpha_h^{-1}(\beta(W(\xi_0, x_0), t)) + \alpha_h^{-1}(\eta^{-1}(2\gamma(\|\mathbf{v}\|_{\infty}))) \leq \epsilon.
\end{equation}
Note that the first term in the left side of~(\ref{equ:TrackConst}) can be minimised by some appropriate choice of the initial conditions $\xi_0$ and $x_0$.

From the above reference-tracking example, we can see that, as a control problem generally specifies an output subspace $\mathbf{Y} \subseteq \mathbb{R}^{p}$ of interest, a simulation function and an interface in Definition~\ref{def:simuFunction} need not exist for all $\xi \in \mathbb{R}^{\hat{n}}$, as the abstract system, by the ASHC technique, is normally controlled to operate within a certain state subspace $\mathbf{V}$ to generate some expected output trajectories $\bm{\psi}(t) \in \mathbf{Y}$. In such a case, a basic requirement for the set $\mathbf{V}$ is that
\begin{equation}\label{equ:SimuLocalRequire}
    \mathbf{V} \subseteq \mathbb{R}^{\hat{n}} : \forall \psi \in \mathbf{Y}, \exists \,\xi \in \mathbf{V} \text{ s.t. } \psi = \kappa(\xi),
\end{equation}
which ensures that when the abstract system~$\Sigma^\prime$ is operating with the expected output $\bm{\psi}(t) \in \mathbf{Y}$, the simulation function and the interface can guarantee the boundedness in the output discrepancy between the two systems~$\Sigma$ and~$\Sigma^\prime$. 


\begin{remark}
    The simulation function in Definition~\ref{def:simuFunction} has a similar mathematical structure to the one used in~\cite[Definition 3.2]{ref:zamani2017compositional}. However, key distinctions exist, as~\cite[Definition 3.2]{ref:zamani2017compositional} imposes a global definition and does not explicitly specify an associated interface function. Moreover, despite this structural similarity, Theorem~\ref{thm:ErrorBound} and its proof establish a tighter bound than that presented in~\cite[Theorem 3.3]{ref:zamani2017compositional}.
\end{remark}

\subsection{Requirement 2: m-Relation}

While the simulation function provides guarantees on the output discrepancy when implementing the ASHC technique, another classical requirement on the abstraction design, as discussed in~\cite{ref:girard2009hierarchical}, is the preservation of all possible output behaviours of the concrete system. To characterise this property, we introduce the following definition.

\begin{define}\label{def:mRelat}
    Given a set $\mathbf{Y} \subseteq \mathbb{R}^p$, let $\mathbf{X}_y = \{ x \in \mathbb{R}^n \,|\, h(x) \in \mathbf{Y}\}$. Suppose for any $u \in \mathbf{U}$ and $x \in \mathbf{X}_y$, there exists a mapping $m: \mathbf{X}_y \to \mathbb{R}^{\hat{n}}$ and a vector $v \in \mathbb{R}^{\hat{m}}$ such that 
    \begin{subequations}\label{equ:mRelat}
    \begin{align}
        \frac{\partial m(x)}{\partial x} f(x, u) &= \phi(m(x), v), \label{equ:mRelatDiff}\\
        h(x) &= \kappa(m(x)). \label{equ:mRelatOut}
    \end{align} 
    \end{subequations}
    Then we say that system~$\Sigma$ is \textit{$m$-related} to system~$\Sigma^\prime$ on $\mathbf{Y}$.
\end{define}
\hfill$\blacksquare$

Definition~\ref{def:mRelat} is a generalisation of the $\Pi$-relation defined in~\cite[Definition 3]{ref:girard2009hierarchical}, which guarantees, in the linear time-invariant (LTI) case, that all possible output trajectories of the concrete system can be reproduced by the abstract system. Similarly, this $m$-relation in Definition~\ref{def:mRelat} implies the following result.

\begin{thm}\label{thm:mRelatOutMatch}
    Given a set $\mathbf{Y} \subseteq \mathbb{R}^p$. If system~$\Sigma^\prime$ is $m$-related to system~$\Sigma$ on $\mathbf{Y}$, then for any output trajectory $\mathbf{y}: \ell \to \mathbf{Y}$ of system~$\Sigma$ on some open interval $\ell \subseteq \mathbb{R}$, $\bm{\psi}(t) = \mathbf{y}(t)$ is also an output trajectory of system~$\Sigma^\prime$ for all $t \in \ell$.
    \hfill$\blacksquare$
\end{thm}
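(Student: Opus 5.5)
The plan is to build the abstract trajectory directly from the concrete one by setting $\bm{\xi}(t) := m(\mathbf{x}(t))$. Fix an output trajectory $\mathbf{y}:\ell\to\mathbf{Y}$ of $\Sigma$, generated by a state trajectory $\mathbf{x}$ and an input $\mathbf{u}\in\mathcal{U}$ with $\dot{\mathbf{x}} = f(\mathbf{x},\mathbf{u})$ for almost all $t\in\ell$. Since $h(\mathbf{x}(t)) = \mathbf{y}(t)\in\mathbf{Y}$, we have $\mathbf{x}(t)\in\mathbf{X}_y$ for all $t\in\ell$, so $m(\mathbf{x}(t))$ is well defined. (The statement has the roles of $\Sigma$ and $\Sigma^\prime$ swapped relative to Definition~\ref{def:mRelat}; I read it in the sense of that definition, i.e.\ $\Sigma$ is $m$-related to $\Sigma^\prime$ on $\mathbf{Y}$.)

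\emph{Output matching.} By~(\ref{equ:mRelatOut}),
\[
\kappa(\bm{\xi}(t)) = \kappa(m(\mathbf{x}(t))) = h(\mathbf{x}(t)) = \mathbf{y}(t) \quad \text{for all } t\in\ell .
\]
Hence the output of $\Sigma^\prime$ along $\bm{\xi}$ is $\bm{\psi}=\mathbf{y}$.

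\emph{State equation.} Assuming $m$ is continuously differentiable (implicit in Definition~\ref{def:mRelat}, which uses $\partial m/\partial x$), $\bm{\xi} = m\circ\mathbf{x}$ is locally absolutely continuous, being a $C^1$ function composed with a locally absolutely continuous curve. The chain rule then gives, for almost every $t$,
\[
\dot{\bm{\xi}}(t) = \frac{\partial m}{\partial x}(\mathbf{x}(t))\, f(\mathbf{x}(t),\mathbf{u}(t)).
\]
By~(\ref{equ:mRelatDiff}), for each pair $(x,u)$ there is a $v$ with this quantity equal to $\phi(m(x),v)$. I would make this choice a function $v = \nu(x,u)$ and set $\mathbf{v}(t) := \nu(\mathbf{x}(t),\mathbf{u}(t))$. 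Then $\dot{\bm{\xi}} = \phi(\bm{\xi},\mathbf{v})$ almost everywhere, so $\bm{\xi}$ is a state trajectory of $\Sigma^\prime$ with output $\bm{\psi}=\mathbf{y}$ on $\ell$.

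\emph{Main obstacle: measurability of $\mathbf{v}$.} The remaining requirement is $\mathbf{v}\in\mathcal{V}$, which needs $\mathbf{v}$ to be measurable; Definition~\ref{def:mRelat} only asserts existence of $v$ pointwise. I would handle this with a measurable selection argument. The set-valued map
\[
t\mapsto \{v : \phi(\bm{\xi}(t),v) = \dot{\bm{\xi}}(t)\}
\]
has closed values if $\phi$ is continuous, and a measurable graph because $\dot{\bm{\xi}}$ is measurable. A Filippov-type implicit function lemma then yields a measurable selection. In the input-affine case of interest, $\phi(\xi,v) = a(\xi)+b(\xi)v$, I would instead take the explicit selection $\mathbf{v} = b(\bm{\xi})^\dagger\bigl(\dot{\bm{\xi}} - a(\bm{\xi})\bigr)$. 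This is measurable, and it solves the equation because a solution exists pointwise, so $\dot{\bm{\xi}} - a(\bm{\xi})$ lies in the range of $b(\bm{\xi})$.
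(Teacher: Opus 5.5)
Your proof is correct and follows the same route as the paper's. You set $\bm{\xi}=m(\mathbf{x})$, use \eqref{equ:mRelatDiff} to show it is a state trajectory of $\Sigma^\prime$ for a suitable input, and use \eqref{equ:mRelatOut} to get $\bm{\psi}=\mathbf{y}$; your reading of the roles of $\Sigma$ and $\Sigma^\prime$ as in Definition~\ref{def:mRelat} is the intended one. The paper simply asserts that a suitable $\mathbf{v}$ exists, so your absolute-continuity/chain-rule step and your measurable-selection step supply rigour it omits; this includes the pseudoinverse choice, which is measurable because $B\mapsto B^\dagger$ is Borel, though discontinuous.
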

\begin{proof}
    Recall that $\mathbf{X}_y = \{ x \in \mathbb{R}^n \,|\, h(x) \in \mathbf{Y}\}$. The satisfaction of~(\ref{equ:mRelatDiff}) implies the existence of an input $\mathbf{v}$ such that an invariant manifold $\mathcal{M}_s = \{(x, \xi) \in (\mathbf{X}_y \times \mathbb{R}^{\hat{n}}) \,|\, \xi = m(x) \}$ exists, \textit{i.e.}, for all state trajectory $\mathbf{x}: \ell \to \mathbf{X}_y$ of system~$\Sigma$, $\bm{\xi}(t) = m(\mathbf{x}(t))$ is a state trajectory of system~$\Sigma^\prime$ for all $t \in \ell$. Consequently, the satisfaction of~(\ref{equ:mRelatOut}) implies that for any output trajectory $\mathbf{y}: \ell \to \mathbf{Y}$ of system~$\Sigma$, $\mathbf{y}(t) = h(\mathbf{x}(t)) = \kappa(m(\mathbf{x}(t))) = \kappa(\bm{\xi}(t)) = \bm{\psi}(t)$ is also an output trajectory of system~$\Sigma^\prime$ for all $t \in \ell$.
\end{proof}


Theorem~\ref{thm:mRelatOutMatch} indicates the importance of the $m$-relation in various control tasks. For example, we recall the previous reference-tracking task described after Theorem~\ref{thm:ErrorBound}. An important premise for the constraint~(\ref{equ:TrackConst}) to be satisfied is that the reference trajectory $\mathbf{r}(t) \in \mathbf{Y}$ expected to be tracked by the concrete system~$\Sigma$ can be perfectly tracked by the output of the abstract system~$\Sigma^\prime$, \textit{i.e.}, $\bm{\psi}(t) = \mathbf{r}(t)$. This property, by Theorem~\ref{thm:mRelatOutMatch}, is ensured by the $m$-relation in Definition~\ref{def:mRelat}. Note also that, similarly to what is discussed in Section~\ref{subsec:BODintro}, the output trajectory $\mathbf{y}$ of system~$\Sigma$ is typically expected to vary within a specific region of interest $\mathbf{Y}$. Therefore, unlike the classical $\Pi$-relation that is defined globally for LTI systems~\cite[Definition 3]{ref:girard2009hierarchical}, the proposed $m$-relation is defined with respect to some specified output subspace $\mathbf{Y} \subseteq \mathbb{R}^{p}$ of interest, which is not necessarily equal to $\mathbb{R}^{p}$.

With all the required results introduced, we are ready to formulate the nonlinear ASHC problem.
\begin{problem}[Nonlinear ASHC]\label{prob:ASHC}
    Given a nonlinear concrete system~$\Sigma$ and an output subspace~$\mathbf{Y}$ of interest, design an abstract system $\Sigma^\prime$ and a (locally defined) interface function $u_w$ such that the following two tasks are solved. 
\begin{itemize}
    \item [(1)] For some $\mathbf{V}$ satisfying~(\ref{equ:SimuLocalRequire}), there exists a (locally-defined) simulation function $W$ for all $(\xi, x) \in \mathbf{V} \times \mathbb{R}^n$;
    \item [(2)] The abstract system $\Sigma^\prime$ is $m$-related to the concrete system~$\Sigma$ on~$\mathbf{Y}$.
\end{itemize}
\end{problem}


\section{ASHC Problem: Constructive Solution}\label{sec:solution}
In this section, we present constructive methods for designing the abstract system $\Sigma^\prime$ and the function $u_w$ that guarantee the satisfaction of each of the two key requirements: the bounded output discrepancy and the $m$-relation. To this end, in the rest of this paper we consider the design of an abstract system $\Sigma^\prime$ in the input affine form, \textit{i.e.},
\begin{equation}\label{equ:abstractAffine}
    \phi(\xi, v) = \bar{\phi}(\xi) + \delta(\xi)v,
\end{equation}
where $\bar{\phi}: \mathbb{R}^{\hat{n}} \to \mathbb{R}^{\hat{n}}$ and $\delta: \mathbb{R}^{\hat{n}} \to \mathbb{R}^{\hat{n} \times \hat{m}}$ satisfy the same regularity conditions as $\phi$, and therefore as $f$ of the concrete system~$\Sigma$. 

In fact, most solutions in this paper consider the case that system~$\Sigma$ is also input affine, \textit{i.e.},
\begin{equation}\label{equ:concreteAffine}
    f(x, u) = \bar{f}(x) + g(x)u,
\end{equation}
with $\bar{f}: \mathbb{R}^{n} \to \mathbb{R}^{n}$ and $g: \mathbb{R}^{n} \to \mathbb{R}^{n \times m}$ also satisfy the same regularity conditions as $f$.
Nevertheless, we will show that the core results can be extended to general nonlinear systems.

\subsection{Bounded Output Discrepancy}\label{subsec:BODdesign}
As discussed in Section~\ref{sec:formulate}, an essential design requirement of the ASHC problem is to guarantee the existence of a simulation function that bounds the outputs of systems~$\Sigma$ and~$\Sigma^\prime$ in the hiararchical framework in Fig.~\ref{fig:Hierarchical_Structure}. We now provide a method for characterising the simulation function and the associated interface, relying on the following assumption.
\begin{asmp}\label{asmp:outputLipstchiz}
    The output mapping $h$ of system~$\Sigma$ is Lipschitz continuous, \textit{i.e.}, there exists a constant $L_h \in \mathbb{R}_{> 0}$ such that $\|h(x)-h(z)\| \leq L_h\|x-z\|$ for all $x, z \in \mathbb{R}^n$. 
\end{asmp}

\begin{thm}\label{thm:SimuDesign}
    Consider system~$\Sigma$ satisfying~(\ref{equ:concreteAffine}) and system~$\Sigma^\prime$ satisfying~(\ref{equ:abstractAffine}). Suppose system~$\Sigma$ is asymptotically stabilisable and satisfies Assumption~\ref{asmp:outputLipstchiz}. Assume there exist a set $\mathbf{V} \subseteq \mathbb{R}^{\hat{n}}$ and mappings $p: \mathbf{V} \to \mathbb{R}^n$ and $l: \mathbf{V} \to \mathbf{U}$ that solve
    \begin{subequations}\label{equ:SimuDesign}
    \begin{align}
        \frac{\partial p(\xi)}{\partial \xi} \bar{\phi}(\xi) &= \bar{f}(p(\xi)) + g(p(\xi)) l(\xi), \label{equ:SimuDesignCond1}\\
        \kappa(\xi) &= h(p(\xi)), \label{equ:SimuDesignCond2}
    \end{align}
    \end{subequations}
    for all $\xi \in \mathbf{V}$. Then the function $W(\xi, x) = V(p(\xi), x)$ with $V$ satisfying~\eqref{equ:stabilisableV} is a local simulation function of~$\Sigma^\prime$ to~$\Sigma$ with the associated local interface 
    \begin{equation}\label{equ:Interface}
        u_w(\xi, x, v) = k(x, p(\xi), l(\xi)) + u^*(\xi, x, v)
    \end{equation}
    for some function $u^*: \mathbf{V} \times \mathbb{R}^n \times \mathbb{R}^{\hat{m}} \to \mathbf{U}$ if the function $V$ in~\eqref{equ:stabilisableV} satisfies the following boundedness condition.
    \begin{itemize}
        \item[\textbf{(B)}] There exist functions $\alpha^* \in \mathcal{K}_\infty$ and $\gamma \in \mathcal{K}$ such that the function $\alpha_d := \alpha - \alpha^* \in \mathcal{K}_\infty$ with $\alpha$ in~\eqref{equ:stabilisableCond2}, and
    \begin{equation}\label{equ:SimuResBound}
    \begin{aligned}
        \bigg\| \frac{\partial V(p(\xi), x)}{\partial p(\xi)} \frac{\partial p(\xi)}{\partial \xi} &\delta(\xi)v + \frac{\partial V(p(\xi), x)}{\partial x} g(x) u^*\bigg\| \\ 
        &\; \leq \alpha^*(\|p(\xi) - x\|) + \gamma(\|v\|)
    \end{aligned}
    \end{equation}
    holds for all $(\xi, x) \in \mathbf{V} \times \mathbb{R}^n$ and $v \in \mathbb{R}^{\hat{m}}$.
    \end{itemize}
    \vspace{0.5mm}
\end{thm}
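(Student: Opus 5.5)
The plan is to check the two conditions of Definition~\ref{def:simuFunction} directly for $W(\xi,x)=V(p(\xi),x)$ and $u_w$ in~\eqref{equ:Interface}. The key idea is that, by~\eqref{equ:SimuDesignCond1}, the curve $p(\bm{\xi}(t))$ is a trajectory of the \emph{unforced part} of the abstraction pushed into $\Sigma$. More precisely, it is a trajectory of $\Sigma$ driven by the input $l(\bm{\xi}(t))$. The stabilisability Lyapunov function $V$ of Theorem~\ref{thm:StabiliseLyap} then makes $\mathbf{x}$ converge to it. The $v$-dependent term and the extra input $u^*$ are treated as perturbations, controlled by condition~\textbf{(B)}.

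\emph{Step 1 (output bound~\eqref{equ:SimuFuncCondi1}).} By~\eqref{equ:SimuDesignCond2} and Assumption~\ref{asmp:outputLipstchiz},
$$\|\kappa(\xi)-h(x)\| = \|h(p(\xi))-h(x)\| \le L_h\|p(\xi)-x\|.$$
Combining this with~\eqref{equ:stabilisableCond1} gives
$$W(\xi,x)\ge \underline{\alpha}(\|p(\xi)-x\|)\ge \underline{\alpha}\big(L_h^{-1}\|\kappa(\xi)-h(x)\|\big).$$
So I take $\alpha_h(r):=\underline{\alpha}(r/L_h)$, which is in $\mathcal{K}_\infty$.

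\emph{Step 2 (dissipation inequality~\eqref{equ:SimuFuncCondi2}).} I will differentiate $W$ along $\dot\xi=\bar\phi(\xi)+\delta(\xi)v$ and $\dot x=\bar f(x)+g(x)\big(k(x,p(\xi),l(\xi))+u^*\big)$, and split the result into two groups.

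The first group is
$$\frac{\partial V}{\partial p}\frac{\partial p}{\partial\xi}\bar\phi(\xi)+\frac{\partial V}{\partial x}\big(\bar f(x)+g(x)k(x,p(\xi),l(\xi))\big).$$
Using~\eqref{equ:SimuDesignCond1}, this equals
$$\frac{\partial V}{\partial p}f(p(\xi),l(\xi))+\frac{\partial V}{\partial x}f\big(x,k(x,p(\xi),l(\xi))\big).$$
This is exactly the left side of~\eqref{equ:stabilisableCond2}, with the roles $(x,z,u)$ there played by $(p(\xi),x,l(\xi))$. Hence it is at most $-\alpha(\|p(\xi)-x\|)$.

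The second group is the remaining two terms. It is precisely the expression inside the norm in~\eqref{equ:SimuResBound}, so by~\textbf{(B)} it is at most $\alpha^*(\|p(\xi)-x\|)+\gamma(\|v\|)$. Adding the two groups gives
$$\dot W\le-\alpha_d(\|p(\xi)-x\|)+\gamma(\|v\|).$$

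\emph{Step 3 (convert to a bound in $W$).} From the upper bound in~\eqref{equ:stabilisableCond1}, $\|p(\xi)-x\|\ge\bar\alpha^{-1}(W(\xi,x))$. Since $\alpha_d\in\mathcal{K}_\infty$ is increasing, $-\alpha_d(\|p(\xi)-x\|)\le-\alpha_d\circ\bar\alpha^{-1}(W)$. I would therefore set $\eta:=\alpha_d\circ\bar\alpha^{-1}$, which is in $\mathcal{K}_\infty$, or $\tfrac12\alpha_d\circ\bar\alpha^{-1}$ to obtain slack.

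\emph{Main obstacle.} The delicate points are bookkeeping rather than depth, and I expect two.

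First, the argument ordering must be handled correctly. $V(x,z)$ in Theorem~\ref{thm:StabiliseLyap} has $z$ driven by $k(z,x,u)$. Here the reference trajectory sits in the first slot as $p(\xi)$, and the concrete state $x$ sits in the second slot with input $k(x,p(\xi),l(\xi))$. This is consistent with~\eqref{equ:Interface}, but it should be stated explicitly. It must also be noted that $l(\xi)\in\mathbf{U}$, so that~\eqref{equ:stabilisableCond2} applies, and that $u_w$ takes values in $\mathbf{U}$.

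Second, Definition~\ref{def:simuFunction} asks for a strict inequality. Choosing $\eta=\tfrac12\alpha_d\circ\bar\alpha^{-1}$ makes the inequality strict whenever $W>0$ or $v\neq0$. At the point where $W=0$ and $v=0$, both sides vanish. I would remark that only the non-strict version is used in the proof of Theorem~\ref{thm:ErrorBound}, so the conclusion is unaffected.
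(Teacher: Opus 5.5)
Your proposal is correct and follows the paper's proof essentially step for step. It uses the same $\alpha_h(r)=\underline{\alpha}(r/L_h)$, the same use of \eqref{equ:SimuDesignCond1} to turn the unforced part into the left-hand side of \eqref{equ:stabilisableCond2} with $(p(\xi),x,l(\xi))$ in the roles of $(x,z,u)$, condition \textbf{(B)} for the remaining terms, and $\eta=\alpha_d\circ\bar{\alpha}^{-1}$. Your remark on the strict inequality in Definition~\ref{def:simuFunction} goes beyond the paper, which only derives the non-strict version. One small justification is missing there: in the case $W=0$, $v\neq 0$, the halved bound chain is not strict, and strictness instead holds because $\nabla W=0$ at the minimum point $p(\xi)=x$, so the left-hand side equals $0<\gamma(\|v\|)$.
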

\begin{proof}
    To prove the theorem, we need to show that the function $W(\xi, x) = V(p(\xi), x)$ with the interface $u_w$ in~\eqref{equ:Interface} satisfies~\eqref{equ:SimuFuncCondi1} and~\eqref{equ:SimuFuncCondi2}. In fact, since system~$\Sigma$ is asymptotically stabilisable and therefore $V$ satisfies~\eqref{equ:stabilisableCond1}, $W(\xi, x) = V(p(\xi), x)$ satisfies
    \begin{equation*}
        W(\xi, x) \geq \underline{\alpha}(\|p(\xi)-x\|) \geq \underline{\alpha}\left(\frac{1}{L_h}\|h(p(\xi))-h(x)\|\right),
    \end{equation*}
    where the last inequality is derived based on Assumption~\ref{asmp:outputLipstchiz}. Then, by~\eqref{equ:SimuDesignCond2}, we obtain $W(\xi, x) \geq \underline{\alpha}\bigl(\frac{1}{L_h}\|\kappa(\xi)-h(x)\|\bigr)$, implying that $W(\xi, x)$ satisfies~\eqref{equ:SimuFuncCondi1} with $\alpha_h(\cdot) = \underline{\alpha}(\frac{1}{L_h}(\cdot))$. Then to show~\eqref{equ:SimuFuncCondi2}, as functions $f$ and $\phi$ satisfy~\eqref{equ:concreteAffine} and~\eqref{equ:abstractAffine} respectively, we have
    \begin{equation*}
    \begin{aligned}
    &\nabla W(\xi, x)^T\left[\begin{array}{c}\phi(\xi, v) \\
        f(x, u_w(\xi, x, v)) 
    \end{array}\right] \\
    &= \frac{\partial W}{\partial p(\xi)} \frac{\partial p(\xi)}{\partial \xi}\left(\bar{\phi}(\xi) + \delta(\xi)v\right) + \frac{\partial W}{\partial x} \!\left(\bar{f}(x) \!+\! g(x)u_w\right) \\
    &= \frac{\partial W}{\partial p(\xi)} f(p(\xi), l(\xi)) \!+\! \frac{\partial W}{\partial x} \left( \bar{f}(x) \!+\! g(x)u_w\right) \!+\! \frac{\partial W}{\partial \xi}\delta(\xi)v,
    \end{aligned}
\end{equation*}
where $\partial W(\xi, x) /\partial p(\xi) = V(p(\xi), x) /\partial p(\xi)$, and the last equation is derived from~\eqref{equ:SimuDesignCond1}. Remember that $W(\xi, x) = V(p(\xi), x)$ with $V$ satisfying~\eqref{equ:stabilisableCond2} and the interface $u_w$ is designed as~\eqref{equ:Interface}. If condition \textbf{(B)} holds, we obtain
    \begin{equation*}
    \begin{aligned}
        &\nabla W(\xi, x)^T\left[\begin{array}{c}\phi(\xi, v) \\
        f(x, u_w(\xi, x, v)) 
    \end{array}\right] \\
    &= \frac{\partial W}{\partial p(\xi)} f(p(\xi), l(\xi)) + \frac{\partial W}{\partial x} \left( \bar{f}(x) + g(x)k(x, p(\xi), l(\xi))\right) \\
    & \quad + \frac{\partial W}{\partial p(\xi)} \frac{\partial p(\xi)}{\partial \xi} \delta(\xi)v + \frac{\partial W}{\partial x} g(x) u^* \\
    &\leq -\alpha \left(\|p(\xi) - x\|\right) + \left\|\frac{\partial W}{\partial p(\xi)} \frac{\partial p(\xi)}{\partial \xi}\delta(\xi)v + \frac{\partial W}{\partial x} g(x) u^*\right\| \\
    &\leq -\alpha \left(\|p(\xi) - x\|\right) + \alpha^*\left(\|p(\xi) - x\|\right) + \gamma(\|v\|) \qquad \qquad \qquad \\
    &\leq -\alpha_d \left(\|p(\xi) - x\|\right) + \gamma(\|v\|).
    \end{aligned}
\end{equation*}
Now consider again that $W(\xi, x) = V(p(\xi), x)$ with $V$ satisfying~\eqref{equ:stabilisableCond1}, we have $W(\xi, x) \leq \overline{\alpha}(\|p(\xi)-x\|)$, or equivalently, $\overline{\alpha}^{-1}(W(\xi, x)) \leq \|p(\xi)-x\|$. This finally implies
\begin{equation*}
    -\alpha_d \!\left(\|p(\xi) - x\|\right) + \gamma(\|v\|) \!\leq\! -\alpha_d \circ \overline{\alpha}^{-1}(W(\xi, x)) + \gamma(\|v\|),
\end{equation*}
showing that $W$ satisfies~\eqref{equ:SimuFuncCondi2} with $\eta = \alpha_d \circ \overline{\alpha}^{-1}$.
\end{proof}

Theorem~\ref{thm:SimuDesign} presents a constructive way of designing an abstract system~$\Sigma^\prime$ and an interface $u_w$ to guarantee the existence of a simulation function under the extra condition~\textbf{(B)}. We now provide a sufficient condition that can ensure the satisfaction of this condition.
\begin{prop}\label{prop:condB}
    Consider condition~\textbf{(B)}. Suppose the set $\mathbf{V} \subseteq \mathbb{R}^{\hat{n}}$ is compact, the mapping $p$ that solves~\eqref{equ:SimuDesign} is continuously differentiable, and the mapping $\delta$ is continuous for all $\xi \in \mathbf{V}$. Then there exist functions $\alpha^* \in \mathcal{K}_\infty$ and $\gamma \in \mathcal{K}$ such that~\eqref{equ:SimuResBound} holds with $u^*\equiv 0$ if there exists a function $\hat{\alpha}^* \in \mathcal{K}_\infty$ such that $V$ in~\eqref{equ:stabilisableV} satisfies
    \begin{equation}\label{equ:bounddVdx}
        \left\|\frac{\partial V(x, z)}{\partial x} \right\|^2 \leq \hat{\alpha}^*(\|x - z\|)
    \end{equation}
    for all $x, z \in \mathbb{R}^n$.
\end{prop}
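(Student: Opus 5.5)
With $u^*\equiv 0$, the left-hand side of~\eqref{equ:SimuResBound} reduces to $\bigl\|\frac{\partial V(p(\xi), x)}{\partial p(\xi)}\frac{\partial p(\xi)}{\partial \xi}\delta(\xi)v\bigr\|$. I would split this product into a factor depending only on $\|p(\xi)-x\|$ and a factor depending only on $\|v\|$, and then separate the two with a Young-type inequality.

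\emph{Step 1: uniform bound on the $\xi$-dependent part.} Since $\mathbf{V}$ is compact, $p$ is continuously differentiable and $\delta$ is continuous on $\mathbf{V}$, the map $\xi\mapsto \frac{\partial p(\xi)}{\partial \xi}\delta(\xi)$ is continuous on a compact set. Hence there is a constant $M\geq 0$ such that $\bigl\|\frac{\partial p(\xi)}{\partial \xi}\delta(\xi)\bigr\|\leq M$ for all $\xi\in\mathbf{V}$. We may take $M>0$ without loss of generality.

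\emph{Step 2: bound on the gradient of $V$.} Apply~\eqref{equ:bounddVdx} with first argument $p(\xi)$ and second argument $x$. This is the correct orientation, because $p(\xi)$ occupies the first slot of $V$ in $W(\xi,x)=V(p(\xi),x)$. It gives $\bigl\|\frac{\partial V(p(\xi),x)}{\partial p(\xi)}\bigr\|\leq \sqrt{\hat\alpha^*(\|p(\xi)-x\|)}$. Combining with Step 1 and writing $r=\|p(\xi)-x\|$, the left-hand side of~\eqref{equ:SimuResBound} is at most $M\sqrt{\hat\alpha^*(r)}\,\|v\|$.

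\emph{Step 3: separation.} For any $\varepsilon>0$, Young's inequality $ab\leq \frac{\varepsilon}{2}a^2+\frac{1}{2\varepsilon}b^2$ with $a=\sqrt{\hat\alpha^*(r)}$ and $b=M\|v\|$ gives the bound $\frac{\varepsilon}{2}\hat\alpha^*(r)+\frac{M^2}{2\varepsilon}\|v\|^2$. I would then set
\[
\alpha^*:=\tfrac{\varepsilon}{2}\hat\alpha^*, \qquad \gamma(s):=\tfrac{M^2}{2\varepsilon}s^2 .
\]
Both are of the required classes: $\alpha^*\in\mathcal{K}_\infty$ because $\hat\alpha^*\in\mathcal{K}_\infty$, and $\gamma\in\mathcal{K}$ (indeed $\mathcal{K}_\infty$). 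Inequality~\eqref{equ:SimuResBound} then holds for all $(\xi,x)\in\mathbf{V}\times\mathbb{R}^n$ and all $v$.

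\emph{Step 4 (main obstacle): $\alpha_d=\alpha-\alpha^*\in\mathcal{K}_\infty$.} The remaining requirement of \textbf{(B)} is that $\alpha-\frac{\varepsilon}{2}\hat\alpha^*$ be of class $\mathcal{K}_\infty$. This is where I expect the difficulty. The freedom in $\varepsilon$ only controls the ratio $\hat\alpha^*/\alpha$, not its growth, so $\varepsilon$ small enough does not suffice for arbitrary $\alpha$ and $\hat\alpha^*$.

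\begin{itemize}
\item[(a)] \emph{Linear domination.} My first attempt is to use $\alpha$ and $\hat\alpha^*$ as given by the Lyapunov function, for instance both quadratic in the incremental case. Suppose that $\hat\alpha^*\leq c\,\alpha$ for some $c>0$, and that $\alpha-\frac{\varepsilon c}{2}\hat\alpha^*$ stays increasing and unbounded, which holds for example when both functions are multiples of the same $\mathcal{K}_\infty$ function. Choosing $\varepsilon<2/c$ then gives $\alpha_d\in\mathcal{K}_\infty$.
\item[(b)] \emph{Generalised Young inequality.} If such a domination is not available, I would instead use $ab\leq \rho(a^2)+\rho^{*}(b^2)$ for a suitable convex $\mathcal{K}_\infty$ function $\rho$ with Legendre-type conjugate $\rho^*$. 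The aim is to choose $\rho$ so that $\rho\circ\hat\alpha^*\leq \frac12\alpha$, and then take $\alpha^*=\rho\circ\hat\alpha^*$ and $\gamma(s)=\rho^*(M^2s^2)$. Strict monotonicity and unboundedness of $\alpha-\alpha^*$ can then be arranged by replacing $\alpha^*$ with a smaller $\mathcal{K}_\infty$ minorant, if necessary.
\end{itemize}

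Verifying this last step for general $\mathcal{K}_\infty$ pairs is the step I expect to require the most care, and possibly an implicit compatibility assumption between $\alpha$ and $\hat\alpha^*$.
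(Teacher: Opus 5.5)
Your Steps 1--3 are the paper's argument: Cauchy--Schwarz, a compactness bound on $\|\frac{\partial p}{\partial\xi}\delta\|$, Young's inequality with parameter $\varepsilon$, and \eqref{equ:bounddVdx}, giving $\alpha^*=\frac{\varepsilon}{2}\hat\alpha^*$ and a quadratic $\gamma$; the steps come in a slightly different order, and these three steps alone already prove the proposition as stated. Step 4 is not needed, since the statement only asserts \eqref{equ:SimuResBound}; the paper likewise treats $\alpha_d=\alpha-\frac{\varepsilon}{2}\hat\alpha^*\in\mathcal{K}_\infty$ as a separate requirement, met by a suitable choice of $\varepsilon$ in the quadratic case, so your caution there agrees with the paper and is not a gap.
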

\begin{proof}
Consider~\eqref{equ:SimuResBound} and let $u^* \equiv 0$. By the Cauchy-Schwarz inequality and Young’s inequality, the left-hand side of~\eqref{equ:SimuResBound} comes down to
\begin{equation}\label{equ:SimuResBoundYoung}
\begin{aligned}
    &\left\|\frac{\partial V(p(\xi), x)}{\partial p(\xi)} \frac{\partial p(\xi)}{\partial \xi}\delta(\xi)v\right\| \leq \left\|\frac{\partial V(p(\xi), x)}{\partial p(\xi)}\right\| \left\|\frac{\partial p(\xi)}{\partial \xi}\delta(\xi)v\right\| \\
    &\leq \frac{\varepsilon}{2} \left\|\frac{\partial V(p(\xi), x)}{\partial p(\xi)} \right\|^2 \!+ \frac{1}{2\varepsilon}\left\| \frac{\partial p(\xi)}{\partial \xi}\delta(\xi)\right\|^2 \|v\|^2
\end{aligned}
\end{equation}
for any $\varepsilon > 0$. Note that since $p$ is continuously differentiable with $\delta$ continuous, there exists a positive constant $\bar{d}$ such that $$\left\|\frac{\partial p(\xi)}{\partial \xi}\delta(\xi)\right\|^2 \leq \bar{d}$$
for all $\xi \in \mathbf{V}$ with $\mathbf{V}$ a compact set. Then by~\eqref{equ:bounddVdx}, there exist functions $\alpha^* = \frac{\varepsilon}{2}\hat{\alpha}^* \in \mathcal{K}_{\infty}$ and $\gamma(r) = \frac{\bar{d}}{2\varepsilon}r^2$ such that 
\begin{equation}
\begin{aligned}
    \frac{\varepsilon}{2} \left\|\frac{\partial V(p(\xi), x)}{\partial p(\xi)} \right\|^2 \!\!+&\, \frac{1}{2\varepsilon}\left\| \frac{\partial p(\xi)}{\partial \xi}\delta(\xi)\right\|^2 \|v\|^2 
    \\ 
    &\leq \alpha^*(\|p(\xi) - x\|) + \gamma(\|v\|),
\end{aligned}
\end{equation}
\textit{i.e.},~\eqref{equ:SimuResBound} holds. 
\end{proof}

By Theorem~\ref{thm:SimuDesign} and Proposition~\ref{prop:condB}, condition~\textbf{(B)} that guarantees the existence of a simulation function can be satisfied if the Lyapunov function $V$ in~\eqref{equ:stabilisableV} satisfies~\eqref{equ:bounddVdx} with $\alpha_d = \alpha - \alpha^* = \alpha - \frac{\varepsilon}{2}\hat{\alpha}^* \in \mathcal{K}_\infty$. In fact, condition~\eqref{equ:bounddVdx} can be easily satisfied if $V$ is quadratic, \textit{i.e.}, $V(x, z) = (x - z)^\top P (x - z)$ for some positive definite matrix $P \in \mathbb{R}^{n \times n}$. This is true for many incrementally stable systems, see~\cite{ref:jouffroy2010tutorial}. In this case, the requirement $\alpha_d = \alpha - \frac{\varepsilon}{2}\hat{\alpha}^* \in \mathcal{K}_\infty$ can also be naturally satisfied by an appropriate choice of the free parameter $\varepsilon$, as both $\alpha$ and $\hat{\alpha}^*$ can be quadratic.

\begin{remark}
Although Proposition~\ref{prop:condB} ensures the satisfaction of condition~\textbf{(B)} with $u^* \equiv 0$, the input function $u^*$ could be used to minimise the bound in~\eqref{equ:SimuResBound}, thereby minimising the bound~\eqref{equ:ErrorBound}, as long as condition \textbf{(B)} is satisfied. In particular, we can adopt a linear structure with respect to $v$ as
\begin{equation}\label{equ:uMinBound}
    u^*(\xi, x, v) = q(\xi, x)v
\end{equation}
for some function $q: \mathbf{V} \times \mathbb{R}^n \to \mathbb{R}^{m \times \hat{m}}$. Substituting this into~\eqref{equ:SimuResBound} yields
\begin{equation*}
    \begin{aligned}
        &\bigg\| \frac{\partial V(p(\xi), x)}{\partial p(\xi)} \frac{\partial p(\xi)}{\partial \xi} \delta(\xi)v + \frac{\partial V(p(\xi), x)}{\partial x} g(x) u^*\bigg\| \\ 
        &\leq \bigg\| \frac{\partial V(p(\xi), x)}{\partial p(\xi)} \frac{\partial p(\xi)}{\partial \xi} \delta(\xi) + \frac{\partial V(p(\xi), x)}{\partial x} g(x) q(\xi, x)\bigg\| \|v\|.
    \end{aligned}
\end{equation*}
When $q \equiv 0$, this expression naturally reduces to the bound for~\eqref{equ:SimuResBound} established by Proposition~\ref{prop:condB}. Instead, one can design $q$ to minimise the norm factor $\big\| \frac{\partial V(p(\xi), x)}{\partial p(\xi)} \frac{\partial p(\xi)}{\partial \xi} \delta(\xi) + \frac{\partial V(p(\xi), x)}{\partial x} g(x) q(\xi, x)\big\|$.
For instance, in the linear case, $q$ can simply be chosen as a constant matrix, see~\cite[Proposition 1]{ref:girard2009hierarchical}. We further illustrate this design for nonlinear systems through the example in Section~\ref{sec:example}. 
\end{remark}

\begin{remark}\label{rmk:LinearCompare}
Several similarities to the linear method in~\cite[Theorem 2]{ref:girard2009hierarchical} can be identified. Theorem~\ref{thm:SimuDesign} could be seen as a nonlinear generalisation of~\cite[Theorem 2]{ref:girard2009hierarchical}, where the asymptotic stabilisability assumption boils down to the stabilisability of the linear concrete system, and Assumption~\ref{asmp:outputLipstchiz} trivially holds. We also note that condition \textbf{(B)} naturally holds as the Lyapunov function in~\eqref{equ:stabilisableV} for linear systems is quadratic and therefore~\eqref{equ:bounddVdx} holds. In fact, the same simplification as in~\eqref{equ:SimuResBoundYoung} has been used in the linear case to prove the boundedness in~\eqref{equ:SimuResBound}, see~\cite[Theorem 2]{ref:girard2009hierarchical} and~\cite[Equation (3.12)]{ref:samari2025model}. Finally, note that the PDE in~\eqref{equ:SimuDesignCond1} is a nonlinear extension of the Sylvester equation in~\cite[Equation~(8)]{ref:girard2009hierarchical},
and it defines an invariant manifold $\mathcal{M}_d = \{(x, \xi) \in (\mathbb{R}^n \times \mathbf{V}) \,|\, x = p(\xi) \}$
in the interconnection in Fig.~\ref{fig:Hierarchical_Structure} when $\mathbf{v} \equiv 0$. 
\end{remark}



\begin{remark}\label{rmk:SimuNotAffine}
When system~$\Sigma$ is not input-affine, Theorem~\ref{thm:SimuDesign} still works with the right side of~\eqref{equ:SimuDesignCond1} replaced by $f(p(\xi), l(\xi))$. In this case, condition \textbf{(B)} can still be satisfied using Proposition~\ref{prop:condB}. Therefore, the result proposed in this subsection can be extended to general nonlinear systems.
\end{remark}

\begin{remark}\label{rmk:solvePDE}
    The method in Theorem~\ref{thm:SimuDesign} relies on the solution of the PDE in~\eqref{equ:SimuDesignCond1}. Although an analytic solution $p$ is not easy to obtain, especially for high-dimensional concrete systems, various studies have presented numerical methods for approximating such a solution for reduced-order modelling or control purposes, see, \textit{e.g.},~\cite{ref:kalise2018polynomial,ref:kalise2020robust,ref:faedo2021approximation,ref:doebeli2024polynomial}. One special case of~\eqref{equ:SimuDesignCond1} is to design $\bar{\phi}(\xi) \equiv \bm{0}_{\hat{n} \times \hat{n}}$. In this manner, the resulting equation~\eqref{equ:SimuDesignCond1} becomes an algebraic equation $f(p(\xi), l(\xi)) = 0$, obviating the need for solving PDEs.
\end{remark}

\subsection{m-Relation}\label{subsec:mRelateDesign}
As mentioned in Section~\ref{sec:formulate}, the second design requirement of the ASHC problem is the preservation of all possible output behaviours of the concrete system. This preservation is specified by the so-called $m$-relation in Definition~\ref{def:mRelat}, which depends on the solvability of equations~\eqref{equ:mRelat} for any input $u$. Now we show that this condition can be simplified when the concrete system is input-affine.

\begin{lem}\label{lem:mRelatePDE}
	Given a set $\mathbf{Y} \subseteq \mathbb{R}^p$, let $\mathbf{X}_y = \{ x \in \mathbb{R}^n \,|\, h(x) \in \mathbf{Y}\}$. Suppose systems~$\Sigma$ and~$\Sigma^\prime$ are input affine and therefore satisfies~\eqref{equ:concreteAffine} and~\eqref{equ:abstractAffine}, respectively. Then system~$\Sigma$ is $m$-related to system~$\Sigma^\prime$ on $\mathbf{Y}$ if there exist functions $m: \mathbf{X}_y \to \mathbb{R}^{\hat{n}}$, $b: \mathbf{X}_y \to \mathbb{R}^{\hat{m}}$, and $c: \mathbf{X}_y \to \mathbb{R}^{\hat{m} \times m}$ such that
	\begin{subequations}\label{equ:mRelatSuffi}
		\begin{align}
			\frac{\partial m(x)}{\partial x} \bar{f}(x) &= \bar{\phi}(m(x)) + \delta(m(x))b(x), \label{equ:mRelatSuffiDiff}\\
			\frac{\partial m(x)}{\partial x} g(x) &= \delta(m(x)) c(x), \label{equ:mRelatSuffiIn}\\
			h(x) &= \kappa(m(x)). \label{equ:mRelatSuffiOut}
		\end{align} 
	\end{subequations}
	for all $x \in \mathbf{X}_y$.
\end{lem}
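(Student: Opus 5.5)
The plan is to verify Definition~\ref{def:mRelat} directly, using the functions $m$, $b$, $c$ from~\eqref{equ:mRelatSuffi} to build, for each $x \in \mathbf{X}_y$ and each $u \in \mathbf{U}$, an explicit abstract input $v$ that satisfies~\eqref{equ:mRelat}. The natural candidate is the affine expression
\begin{equation*}
v := b(x) + c(x)u \in \mathbb{R}^{\hat{m}}.
\end{equation*}
This is well defined because $b(x) \in \mathbb{R}^{\hat{m}}$ and $c(x) \in \mathbb{R}^{\hat{m}\times m}$. The mapping $m$ in Definition~\ref{def:mRelat} is taken to be the same $m$ that solves~\eqref{equ:mRelatSuffi}.

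The key step is a one-line computation. Since $\Sigma$ is input affine,~\eqref{equ:concreteAffine} gives
\begin{equation*}
\frac{\partial m(x)}{\partial x} f(x,u) = \frac{\partial m(x)}{\partial x}\bar{f}(x) + \frac{\partial m(x)}{\partial x} g(x) u.
\end{equation*}
I substitute~\eqref{equ:mRelatSuffiDiff} into the first term and~\eqref{equ:mRelatSuffiIn} into the second. This gives
\begin{equation*}
\bar{\phi}(m(x)) + \delta(m(x))b(x) + \delta(m(x))c(x)u = \bar{\phi}(m(x)) + \delta(m(x))\bigl(b(x)+c(x)u\bigr).
\end{equation*}
By~\eqref{equ:abstractAffine}, the right-hand side equals $\phi(m(x), v)$ with the chosen $v$, so~\eqref{equ:mRelatDiff} holds. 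Condition~\eqref{equ:mRelatOut} is exactly~\eqref{equ:mRelatSuffiOut}. Both identities hold for all $x \in \mathbf{X}_y$ and all $u \in \mathbf{U}$, so $\Sigma$ is $m$-related to $\Sigma^\prime$ on $\mathbf{Y}$.

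I do not expect a genuine obstacle. The only point needing care is the quantifier structure of Definition~\ref{def:mRelat}: $m$ must not depend on $u$, while $v$ may depend on both $x$ and $u$. The construction respects this, since $m$ is fixed by~\eqref{equ:mRelatSuffi} and only $v$ varies with $(x,u)$.

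I would also remark that, along trajectories, this gives the feedback-type input $\mathbf{v}(t) = b(\mathbf{x}(t)) + c(\mathbf{x}(t))\mathbf{u}(t)$. This input is measurable whenever $\mathbf{u}$ is measurable and $b$, $c$ are continuous, which is what the argument of Theorem~\ref{thm:mRelatOutMatch} implicitly requires.
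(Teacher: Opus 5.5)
Your proposal is correct and matches the paper's proof: you choose $v = b(x) + c(x)u$, expand $\frac{\partial m(x)}{\partial x} f(x,u)$ via input-affinity, substitute \eqref{equ:mRelatSuffiDiff} and \eqref{equ:mRelatSuffiIn} to obtain $\phi(m(x), v)$, and note that \eqref{equ:mRelatSuffiOut} coincides with \eqref{equ:mRelatOut}. Your remarks that $m$ is independent of $u$ and that $\mathbf{v}$ is measurable along trajectories go beyond what the paper states and do no harm.
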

\begin{proof}
	Suppose~\eqref{equ:mRelatSuffi} holds for all $u \in \mathbf{U}$ and $x \in \mathbf{X}_y$. Consider the PDE~\eqref{equ:mRelatDiff}. Since $f$ and $\phi$ satisfy~\eqref{equ:concreteAffine} and~\eqref{equ:abstractAffine}, respectively, by substituting $v = b(x) + c(x)u$,~\eqref{equ:mRelatSuffiDiff} and~\eqref{equ:mRelatSuffiIn} yields that
	\begin{equation*}
		\begin{aligned}
			\phi(m(x), v) &= \bar{\phi}(m(x)) + \delta(m(x))(b(x) + c(x)u)\\ 
			&= \bar{\phi}(m(x)) + \delta(m(x))b(x) + \delta(m(x))c(x)u \\
			&= \frac{\partial m(x)}{\partial x} \bar{f}(x) +  \frac{\partial m(x)}{\partial x}g(x)u = \frac{\partial m(x)}{\partial x} f(x, u)
		\end{aligned}
	\end{equation*}
	for any $u \in \mathbf{U}$ and $x \in \mathbf{X}_y$. As~\eqref{equ:mRelatSuffiOut} is identical to~\eqref{equ:mRelatOut}, this concludes the proof.
\end{proof}

Lemma~\ref{lem:mRelatePDE} is a generalisation of~\cite[Theorem 5]{ref:niu2025Briging} to the nonlinear case. It proposes a sufficient condition for the $m$-relation under the restriction that the input $v$ driving system~$\Sigma^\prime$ takes the form 
\begin{equation}\label{equ:mRelateLink}
   v = b(x) + c(x) u,
\end{equation}
and the resulting condition~\eqref{equ:mRelatSuffi} only depends on the parameters of systems~$\Sigma$ and~$\Sigma^\prime$. Moreover, together with Theorem~\ref{thm:mRelatOutMatch} and its proof, the satisfaction of~\eqref{equ:mRelatSuffi} also implies that the outputs of systems~$\Sigma$ and~$\Sigma^\prime$ in the interconnection in Fig.~\ref{fig:mRelateInterconnect} match one another ($\mathbf{y} \equiv \bm{\psi}$) when both systems are initialised on the manifold $\mathcal{M}_s = \{(x, \xi) \in (\mathbf{X}_y \times \mathbb{R}^{\hat{n}}) \,|\, \xi = m(x) \}$ for any input $u$. Now we show that the sufficient condition~\eqref{equ:mRelatSuffi} is not difficult to satisfy.

\begin{figure}[tbp]
\begin{centering}
    \includegraphics[width=\linewidth]{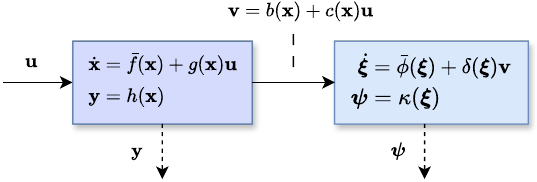}
    \caption{Interconnection to illustrate the design of $m$-relation.}
    \label{fig:mRelateInterconnect}
\end{centering}
\end{figure}

\begin{thm}\label{thm:mRelateDelta}
Consider any functions $m: \mathbf{X}_y \to \mathbb{R}^{\hat{n}}$ and $\kappa$ such that~\eqref{equ:mRelatSuffiOut} holds for all $x \in \mathbf{X}_y$. Suppose $m$ is differentiable. Then there exist functions $b: \mathbf{X}_y \to \mathbb{R}^{\hat{m}}$ and $c: \mathbf{X}_y \to \mathbb{R}^{\hat{m} \times m}$ such that~\eqref{equ:mRelatSuffiDiff} and~\eqref{equ:mRelatSuffiIn} are satisfied for all $x \in \mathbf{X}_y$ if $\delta(x)$ is of full row rank for all $x \in \mathbf{X}_y$.
\end{thm}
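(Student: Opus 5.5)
The plan is to construct $b$ and $c$ explicitly, pointwise in $x$, using a right inverse of $\delta$. Note that the statement writes ``$\delta(x)$ of full row rank for all $x\in\mathbf{X}_y$''. Since $\delta$ is defined on $\mathbb{R}^{\hat n}$ and is evaluated in \eqref{equ:mRelatSuffi} at $m(x)$, I read this as $\delta(m(x))$ having full row rank $\hat n$ for every $x\in\mathbf{X}_y$. This requires $\hat m\ge \hat n$.

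Under this rank condition, $\delta(m(x))\delta(m(x))^\top\in\mathbb{R}^{\hat n\times\hat n}$ is invertible. Hence the Moore--Penrose pseudoinverse
\[
\delta(m(x))^\dagger=\delta(m(x))^\top\bigl(\delta(m(x))\delta(m(x))^\top\bigr)^{-1}
\]
is a right inverse, i.e. $\delta(m(x))\delta(m(x))^\dagger=I_{\hat n}$. Using it, define
\[
b(x):=\delta(m(x))^\dagger\Bigl(\frac{\partial m(x)}{\partial x}\bar f(x)-\bar\phi(m(x))\Bigr),\qquad c(x):=\delta(m(x))^\dagger\frac{\partial m(x)}{\partial x}g(x).
\]
Both are well defined because $m$ is differentiable, and their dimensions are $\hat m$ and $\hat m\times m$, respectively.

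Multiplying on the left by $\delta(m(x))$ and using the right-inverse identity gives
\[
\delta(m(x))b(x)=\frac{\partial m}{\partial x}\bar f-\bar\phi(m),
\]
which is \eqref{equ:mRelatSuffiDiff}. Similarly, $\delta(m(x))c(x)=\frac{\partial m}{\partial x}g$, which is \eqref{equ:mRelatSuffiIn}. Equation \eqref{equ:mRelatSuffiOut} holds by hypothesis.

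This is essentially the observation that a surjective linear map $\delta(m(x))$ can hit any vector in $\mathbb{R}^{\hat n}$, applied separately to the vector $\frac{\partial m}{\partial x}\bar f-\bar\phi(m)$ and to each column of $\frac{\partial m}{\partial x}g$. By Lemma~\ref{lem:mRelatePDE}, the conclusion is then that $\Sigma$ is $m$-related to $\Sigma'$ on $\mathbf{Y}$.

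There is no substantial obstacle. The only points needing care are the interpretation of the rank hypothesis (on $\delta(m(x))$) and remarking that no regularity of $b$ and $c$ beyond pointwise definition is needed for \eqref{equ:mRelatSuffi}. If continuity is desired, it follows from the continuity of $\delta$, of $\partial m/\partial x$ (when $m\in C^1$), and of matrix inversion on invertible matrices.
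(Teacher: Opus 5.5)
Your proposal is correct and matches the paper's proof, which chooses exactly these $b(x)$ and $c(x)$ built from $\delta^\dagger(m(x))$. Your reading of the rank hypothesis as applying to $\delta(m(x))$ is the right one, and your explicit check that $\delta(m(x))\,\delta^\dagger(m(x)) = I_{\hat n}$ under full row rank fills in the step the paper leaves implicit.
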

\begin{proof}
    Suppose $\operatorname{rank}(\delta(x)) = \hat{n}$ for all $x \in \mathbf{X}_y$. Then for any function $m$, there exist functions $b$ and $c$ solving~\eqref{equ:mRelatSuffiDiff} and~\eqref{equ:mRelatSuffiIn} for all $x \in \mathbf{X}_y$. In fact, it is possible to select
    $b(x) = \delta^\dagger(m(x)) \big(\frac{\partial m(x)}{\partial x} \bar{f}(x) - \bar{\phi}(m(x))\big)$ and $c(x) = \delta^\dagger(m(x)) \frac{\partial m(x)}{\partial x} g(x)$.
\end{proof}

Theorem~\ref{thm:mRelateDelta}, together with Lemma~\ref{lem:mRelatePDE}, yields that the $m$-relation can be sufficiently guaranteed if $\delta(x)$ is right invertible for all $x \in \mathbf{X}_y$. As a consequence, one can simply design $\delta \equiv I_{\hat{n}}$, implying the input space covers the whole state space.
Note that Theorem~\ref{thm:mRelateDelta} omitted the solvability of~\eqref{equ:mRelatSuffiOut}. If we only focus on the $m$-relation (without bounded output discrepancy requirement), this equation is easy to satisfy, as one can simply design $\kappa(\xi) = \xi$ and $m(x) = h(x)$, so long as $h$ is differentiable. However,~\eqref{equ:mRelatSuffiOut} is more difficult to satisfy if the bounded output discrepancy requirement in Section~\ref{subsec:BODdesign} must also hold. This issue is discussed in the next subsection.

\begin{figure*}[tbp]
\begin{centering}
    \includegraphics[width=\linewidth]{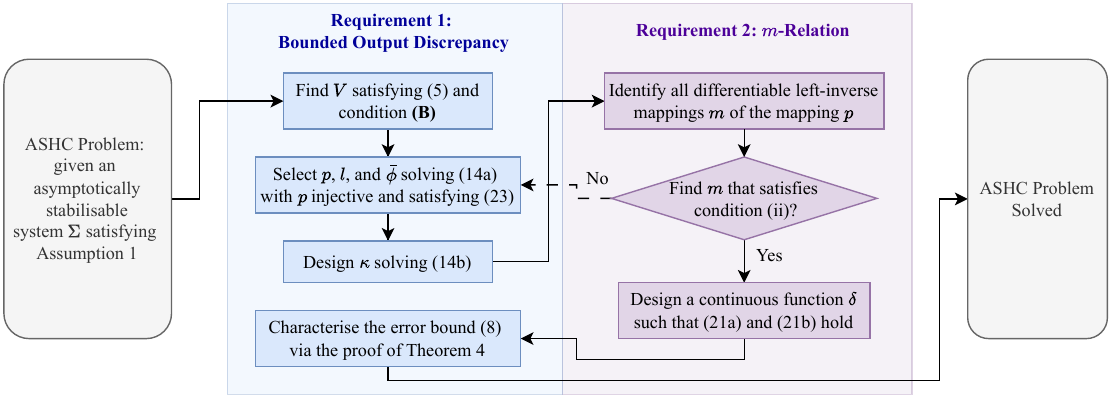}
    \caption{Flow chart of the final design procedure for solving the nonlinear ASHC problem formulated in Problem~\ref{prob:ASHC}.}
    \label{fig:Flow_Chart}
\end{centering}
\end{figure*}

\begin{remark}\label{rmk:RelateMM}
    For readers who are familiar with MOR techniques, the equations in~\eqref{equ:SimuDesign} and~\eqref{equ:mRelatSuffi} are similar to those used for nonlinear moment-matching~\cite[Theorem 7]{ref:astolfi2010model}. In fact, there exists a conceptual relation between the classical ASHC technique in~\cite{ref:girard2009hierarchical} and the moment-matching-based MOR technique~\cite{ref:astolfi2010model,ref:scarciotti2017nonlinear,ref:scarciotti2024interconnection}. This relation has been studied by~\cite{ref:niu2025Briging} in the linear context, and the results therein extend, by means of the framework presented in this paper, to the nonlinear case. For example, when comparing~\eqref{equ:SimuDesign} with the nonlinear moment matching condition~\cite[Section 3.2]{ref:scarciotti2024interconnection}, the hierarchical structure in Fig.~\ref{fig:Hierarchical_Structure} with $\mathbf{v} \equiv 0$ can be seen as a ``direct interconnection'' structure by moment matching, see~\cite[Section III-A]{ref:niu2025Briging}. In this MOR context, the abstract system~$\Sigma^\prime$ with $\delta(\xi) = 0$ behaves as a signal generator, or it can be regarded as a ``limiting'' reduced order model in the sense of moment matching, as it matches the ``moment'' (defined by $h \circ p$) of system~$\Sigma$ at $(\bar{\phi}, l)$ only when $\mathbf{v} \equiv 0$. A similar result also exists when interpreting the $m$-relation in Fig.~\ref{fig:mRelateInterconnect} through the lens of moment matching, see~\cite[Section III-B]{ref:niu2025Briging}.
\end{remark}

\subsection{Achieving Two Design Requirements}\label{subsec:steps}
Now we consider how to design the abstract system~$\Sigma^\prime$ and the interface function~$u_w$ in~\eqref{equ:Interface} to simultaneously satisfy the two aforementioned requirements in the ASHC problem, \textit{i.e.}, to solve Problem~\ref{prob:ASHC}. According to Section~\ref{subsec:BODdesign}, the bounded output discrepancy requires the determination of the mappings $p: \mathbf{V} \to \mathbb{R}^n$ and $l: \mathbf{V} \to \mathbf{U}$ that solve~\eqref{equ:SimuDesign} for some $\mathbf{V}$ satisfying~\eqref{equ:SimuLocalRequire} to guarantee the existence of a simulation function $W$. Then, according to Section~\ref{subsec:mRelateDesign}, the $m$-relation requires the satisfaction of~\eqref{equ:mRelatSuffi} for some mapping $m: \mathbf{X}_y \to \mathbb{R}^{\hat{n}}$, so that no controllable behaviour of system~$\Sigma$ on the specified state space $\mathbf{X}_y$ is disgarded when manipulating $\Sigma^\prime$. Therefore,~\eqref{equ:SimuDesign} and~\eqref{equ:mRelatSuffi} are crucial for solving the ASHC problem.

The main obstacle that hinders the concurrent solvability of~\eqref{equ:SimuDesign} and~\eqref{equ:mRelatSuffi} lies in the satisfaction of~\eqref{equ:SimuDesignCond2} and~\eqref{equ:mRelatSuffiOut}. In fact, for any functions $p$ and $l$ that solve~\eqref{equ:SimuDesignCond1}, if there exist functions $m$ and $\kappa$ such that~\eqref{equ:SimuDesignCond2} and~\eqref{equ:mRelatSuffiOut} hold (and therefore~\eqref{equ:SimuDesign} holds), then Theorem~\ref{thm:mRelateDelta} yields that~\eqref{equ:mRelatSuffiDiff} and~\eqref{equ:mRelatSuffiIn} also hold with any $\delta$ such that $\operatorname{rank}(\delta(x)) = \hat{n}$ for all $x \in \mathbf{X}_y$. The following lemma provides conditions for~\eqref{equ:SimuDesignCond2} and~\eqref{equ:mRelatSuffiOut} to hold simultaneously.

\begin{thm}\label{thm:mCondi}
    Consider any functions $p$, $l$, $\bar{\phi}$, and $\kappa$ that solve~\eqref{equ:SimuDesign}. Then~\eqref{equ:mRelatSuffiOut} holds for all $x \in \mathbf{X}_y$ if there exists a function $m$ satisfying the following conditions:
    \begin{itemize}
        \item[(i)] for any $x \in \mathbf{X}_y \cap \operatorname{Im}(p)$, we have $p(m(x)) = x$.
        \item[(ii)] for any $x \in \mathbf{X}_y \setminus \operatorname{Im}(p)$, there exists a vector $x^\prime \in \mathbf{X}_y \cap \operatorname{Im}(p)$ such that $m(x) = m(x^\prime)$ and $h(x) = h(x^\prime)$.
    \end{itemize}
\end{thm}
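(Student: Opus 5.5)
The argument is a two-case check based on the partition $\mathbf{X}_y = (\mathbf{X}_y \cap \operatorname{Im}(p)) \cup (\mathbf{X}_y \setminus \operatorname{Im}(p))$. The only ingredient from~\eqref{equ:SimuDesign} that we use is the output identity~\eqref{equ:SimuDesignCond2}, $\kappa(\xi) = h(p(\xi))$ for all $\xi \in \mathbf{V}$. The PDE~\eqref{equ:SimuDesignCond1} plays no role here. Before the case analysis, one implicit requirement should be stated explicitly: $m$ must take values in $\mathbf{V}$, the domain of $p$, so that $p(m(x))$ in condition (i) is meaningful. We will assume this, i.e., $m: \mathbf{X}_y \to \mathbf{V}$, since otherwise (i) is not well posed.

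\emph{Case 1: $x \in \mathbf{X}_y \cap \operatorname{Im}(p)$.} Put $\xi = m(x) \in \mathbf{V}$ and apply~\eqref{equ:SimuDesignCond2} at this point. Then condition (i) gives
\begin{equation*}
\kappa(m(x)) = h(p(m(x))) = h(x),
\end{equation*}
which is~\eqref{equ:mRelatSuffiOut} at $x$.

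\emph{Case 2: $x \in \mathbf{X}_y \setminus \operatorname{Im}(p)$.} By condition (ii), choose $x^\prime \in \mathbf{X}_y \cap \operatorname{Im}(p)$ with $m(x) = m(x^\prime)$ and $h(x) = h(x^\prime)$. Case 1 applies to $x^\prime$, so
\begin{equation*}
\kappa(m(x)) = \kappa(m(x^\prime)) = h(x^\prime) = h(x).
\end{equation*}
The two cases cover all of $\mathbf{X}_y$, so~\eqref{equ:mRelatSuffiOut} holds everywhere on $\mathbf{X}_y$.

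The proof involves no real obstacle. The one point needing care is the domain issue above: $m(x)$ has to lie in $\mathbf{V}$ for~\eqref{equ:SimuDesignCond2} to be invoked at $\xi = m(x)$. It is also worth noting that condition (i) makes $m$ a left inverse of $p$ on $\mathbf{X}_y \cap \operatorname{Im}(p)$. Condition (ii) then says that points outside the image are sent by $m$ to the same value as some image point with the same output, which is exactly what allows Case 2 to reduce to Case 1.
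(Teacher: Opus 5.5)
Your proof is correct and is essentially the paper's argument: you split $\mathbf{X}_y$ into $\mathbf{X}_y \cap \operatorname{Im}(p)$ and its complement, use \eqref{equ:SimuDesignCond2} with (i) on image points, and use (ii) to reduce every other point to that case. Your domain caveat is fair but only needed on image points, since (ii) then gives $m(x) = m(x^\prime) \in \mathbf{V}$ for the remaining points automatically.
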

\begin{proof}
    Suppose $m$ satisfies condition (i). As~\eqref{equ:SimuDesignCond2} holds, we have $\kappa(m(x)) = h(p(m(x))) = h(x)$, \textit{i.e.},~\eqref{equ:mRelatSuffiOut} is satisfied for all $x \in \mathbf{X}_y \cap \operatorname{Im}(p)$. Now, suppose $m$ also satisfies condition (ii). By~\eqref{equ:SimuDesignCond2} and condition (i), we obtain $\kappa(m(x)) = h(p(m(x))) = h(p(m(x^\prime))) = h(x^\prime) =  h(x)$ for all $x \!\in\! \mathbf{X}_y \!\setminus\! \operatorname{Im}(p)$. As a result,~\eqref{equ:mRelatSuffiOut} holds for all $x \!\in\! \mathbf{X}_y$.
\end{proof}

A sufficient condition for condition~(i) in Theorem~\ref{thm:mCondi} to hold is that the function $p$ is injective. In such a case, one can define the function $m$ as a left-inverse of $p$ and therefore $m(p(\xi)) = m(x) = \xi$ holds for any $\xi \in \mathbf{V}$. Consequently, given any $x \in \mathbf{X}_y \cap \operatorname{Im}(p)$, we have $p(m(x)) = p(\xi) = x$, \textit{i.e.}, condition~(i) holds. Then $x^\prime$ in condition~(ii) satisfies $x^\prime = p(m(x))$ and such condition reduces to $h(x) = h(p(m(x)))$ for all $x \in \mathbf{X}_y \setminus \operatorname{Im}(p)$. This discussion suggests that condition~(ii) necessitates
\begin{equation}\label{equ:mRelationImP}
    h\left(\mathbf{X}_y \setminus \operatorname{Im} (p)\right) \subseteq h\left(\operatorname{Im}(p)\right).
\end{equation}
In fact, condition~(ii) can be further simplified under a certain linearity condition.

\begin{corol}\label{corol:mRelateGeo}
    Consider system~$\Sigma$. Suppose the function $h$ is linear and assume there exists a function $m$ satisfying condition (i) in Theorem~\ref{thm:mCondi}. Then condition (ii) also holds if and only if 
    \begin{equation}\label{equ:GeoCondi}
        x - p(m(x)) \in \operatorname{Ker}(h)
    \end{equation}
    for all $x \in \mathbf{X}_y \setminus \operatorname{Im}(p)$.
\end{corol}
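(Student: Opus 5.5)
The plan is to prove both directions directly from linearity of $h$ together with condition (i) of Theorem~\ref{thm:mCondi}. The central observation is that, for any $x \in \mathbf{X}_y \setminus \operatorname{Im}(p)$, the point $p(m(x))$ is the natural candidate for $x^\prime$ in condition (ii). We need to check that this candidate is admissible and that it is essentially forced. Write $h(x) = Hx$ for a matrix $H$, so that $\operatorname{Ker}(h) = \operatorname{Ker}(H)$ and $h(x) - h(z) = h(x - z)$.

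\emph{Sufficiency.} Assume~\eqref{equ:GeoCondi} holds, and fix $x \in \mathbf{X}_y \setminus \operatorname{Im}(p)$. Set $x^\prime := p(m(x))$. Clearly $x^\prime \in \operatorname{Im}(p)$. By linearity and~\eqref{equ:GeoCondi}, $h(x) - h(x^\prime) = h(x - p(m(x))) = 0$, so $h(x^\prime) = h(x) \in \mathbf{Y}$ and therefore $x^\prime \in \mathbf{X}_y$.

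It remains to show $m(x^\prime) = m(x)$, i.e.\ $m(p(m(x))) = m(x)$. This is the step I expect to be the main obstacle, since condition (i) only gives $p \circ m = \mathrm{id}$ on $\mathbf{X}_y \cap \operatorname{Im}(p)$ and says nothing directly about $m \circ p$. The trick is to apply (i) at the point $x^\prime$ itself, which gives $p(m(x^\prime)) = x^\prime = p(m(x))$. So $m(x^\prime)$ and $m(x)$ have the same image under $p$. To conclude $m(x^\prime) = m(x)$, I would invoke injectivity of $p$, which is the setting of the discussion preceding the corollary (there $m$ is a left inverse of $p$, so $m(p(\xi)) = \xi$ directly). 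If injectivity is unavailable, I would instead note that the proof of Theorem~\ref{thm:mCondi} only uses $h(p(m(x))) = h(x)$. That identity follows from~\eqref{equ:GeoCondi} alone, so the same conclusion~\eqref{equ:mRelatSuffiOut} is reached regardless.

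\emph{Necessity.} Assume condition (ii) holds, and take $x \in \mathbf{X}_y \setminus \operatorname{Im}(p)$ with its associated $x^\prime \in \mathbf{X}_y \cap \operatorname{Im}(p)$, so that $m(x) = m(x^\prime)$ and $h(x) = h(x^\prime)$. By condition (i), $x^\prime = p(m(x^\prime)) = p(m(x))$. Hence $h(x - p(m(x))) = h(x) - h(x^\prime) = 0$, which is exactly~\eqref{equ:GeoCondi}. Note that this direction uses linearity only through $h(x) - h(z) = h(x - z)$.
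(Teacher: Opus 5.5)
Your argument is the paper's own: the witness $x'=p(m(x))$ for sufficiency, and condition (i) forcing $x'=p(m(x))$ for necessity. Your worry is justified, since the paper's step $m(x')=m(p(m(x)))=m(x)$ ``since $m$ satisfies condition (i)'' in fact relies on the left-inverse property $m\circ p=\mathrm{id}$ from the injective-$p$ discussion before the corollary; your injectivity argument makes that explicit, and you also supply the check $x'\in\mathbf{X}_y$ (via $h(x')=h(x)\in\mathbf{Y}$), which the paper leaves unstated.

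This hypothesis cannot be dropped. Take $h(x)=x_1$, $\mathbf{Y}=[0,\infty)$, $\mathbf{V}=\mathbb{R}$, $p(\xi)=(\xi^2,0)$, and $m(x)=\sqrt{x_1}$ if $x_2=0$, $m(x)=-\sqrt{x_1}$ if $x_2\neq 0$; then (i) and \eqref{equ:GeoCondi} hold, but (ii) fails at $x=(1,1)$. So your fallback correctly shows that (i) and \eqref{equ:GeoCondi} give \eqref{equ:mRelatSuffiOut} directly, but it does not establish the ``if'' direction, and it should be stated as a separate remark rather than as a way around injectivity.
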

\begin{proof}
    \textit{Sufficiency:} Suppose $p$ and $h$ satisfy~\eqref{equ:GeoCondi}. Then for any $x \in \mathbf{X}_y \setminus \operatorname{Im}(p)$, let $x^\prime = p(m(x)) \in \mathbf{X}_y \cap \operatorname{Im}(p)$ and $\bar{x} = x - p(m(x)) \in \operatorname{Ker}(h)$. Since $m$ satisfies condition~(i), we obtain $m(x^\prime) = m(p(m(x))) = m(x)$. The linearity of $h$ yields that $h(x) = h(x^\prime) + h(\bar{x}) = h(x^\prime)$, as $\bar{x} \in \operatorname{Ker}(h)$.

    \textit{Necessity:} Suppose condition (ii) holds. Then for any $x \in \mathbf{X}_y \setminus \operatorname{Im}(p)$, $m(x) = m(x^\prime)$ with $x^\prime \in \operatorname{Im}(p)$. Since $m$ also satisfies condition (i), we have $p(m(x)) = p(m(x^\prime)) = x^\prime$. Now let $\bar{x} = x - x^\prime$. As $h(x) = h(x^\prime)$ with $h$ a linear function, we obtain $h(\bar{x}) = h(x - x^\prime) = h(x) - h(x^\prime) = 0$, implying that~\eqref{equ:GeoCondi} holds.
\end{proof}

\begin{remark}
    If systems~$\Sigma$ and~$\Sigma^\prime$ are linear, the functions $p$ and $m$ become linear mappings and therefore $\operatorname{Im}(p)$ and $\operatorname{Ker}(h)$ become vector spaces. Consequently, when $\mathbf{X}_y = \mathbb{R}^n$, ~\eqref{equ:GeoCondi} boils down to the condition $\operatorname{Im}(p) + \operatorname{Ker}(h) = \mathbb{R}^n$, see~\cite[Lemma 3]{ref:girard2009hierarchical} for more details.
\end{remark}

Recall that regarding the solutions of~\eqref{equ:SimuDesign} and~\eqref{equ:mRelatSuffi}, Problem~\ref{prob:ASHC} also requires the satisfaction of~\eqref{equ:SimuLocalRequire}. This condition is, in fact, ensured by satisfying the conditions~(i) and~(ii) in Theorem~\ref{thm:mCondi} (and therefore~\eqref{equ:mRelatSuffiOut} holds) with $m$ being the left-inverse of $p$ on its domain $\mathbf{V}$. In this case, it is straightforward to see that $\operatorname{Im}(m) = \mathbf{V}$. Consequently, for any $\psi = y = h(x) \in \mathbf{Y}$ with any $x \in \mathbf{X}_y$, the satisfaction of~\eqref{equ:mRelatSuffiOut} yields the existence of $\xi = m(x) \in \mathbf{V}$ such that $\psi = y = h(x) = \kappa(\xi)$, \textit{i.e.},~\eqref{equ:SimuLocalRequire} holds.

By combining the results of Theorems~\ref{thm:mRelateDelta} and~\ref{thm:mCondi}, we summarise the steps for designing an input-affine abstract system~$\Sigma^\prime$ and an interface $u_w$ in~\eqref{equ:Interface} that guarantee~\eqref{equ:SimuDesign} and~\eqref{equ:mRelatSuffi} as follows.
\begin{enumerate}
    \item Select functions $p$, $l$, $\bar{\phi}$ solving~\eqref{equ:SimuDesignCond1} with $p$ injective such that~\eqref{equ:mRelationImP} holds.
    \item Design the function $\kappa$ solving~\eqref{equ:SimuDesignCond2}.
    \item Identify all differentiable left-inverse mappings $m$ of the mapping $p$.
    \item Find $m$ that satisfies condition~(ii). If not successful, return to step 1) to redesign the functions $p$, $l$, and $\bar{\phi}$.
    \item Design $\delta$ such that~\eqref{equ:mRelatSuffiDiff} and~\eqref{equ:mRelatSuffiIn} hold by following Theorem~\ref{thm:mRelateDelta}.
\end{enumerate}



\begin{remark}
    As required by Theorem~\ref{thm:mRelateDelta} and in step 3) above, the differentiability of the function $m$ is necessary for~\eqref{equ:mRelatSuffiDiff} to hold. Since we have shown, below Theorem~\ref{thm:mCondi}, that the function $m$ is the left inverse of the injective function $p$, the differentiability of $m$ can be guaranteed by selecting $p$ to be a smooth immersion/embedding, see~\cite[Chapter~4]{ref:lee2013introduction} for more details.
\end{remark}

The above steps only guarantee the satisfaction
of~\eqref{equ:SimuLocalRequire},~\eqref{equ:SimuDesign}, and~\eqref{equ:mRelatSuffi}. By Theorem~\ref{thm:SimuDesign} and Lemma~\ref{lem:mRelatePDE}, this means that
the $m$-relation is satisfied, but the bounded output discrepancy
is not guaranteed, as it requires extra conditions, \textit{i.e.}, Assumption~\ref{asmp:outputLipstchiz} and condition~\textbf{(B)} as in~\eqref{equ:SimuResBound}. By taking the simulation function into consideration, the final design procedure for solving Problem~\ref{prob:ASHC} is summarised in Fig.~\ref{fig:Flow_Chart}.



In fact, for a general nonlinear function $h$ of the concrete system $\Sigma$, deriving a constructive condition on the function $p$ in~\eqref{equ:SimuDesign} that guarantees the existence of a differentiable function $m$ satisfying Theorem~\ref{thm:mCondi} remains challenging. Nevertheless, it should be noted that in the context of ASHC, the $m$-relation requirement discussed in this subsection is not as critical as the bounded output discrepancy requirement in Section~\ref{subsec:BODdesign}. In fact, the bounded output discrepancy is the core target of the ASHC technique, while the $m$-relation is added by some authors as a requirement to provide a sufficient guarantee of the input-output controllability of the constructed abstract system. On the one hand, this controllability guarantee is a sufficient but not necessary condition for preserving the ability of system~$\Sigma^\prime$ to recover some designated output behaviour of system~$\Sigma$. Failing to satisfy the $m$-relation does not imply that system~$\Sigma$ is unable to satisfy the designated control goal. On the other hand, even if the control goal, in the worst case, cannot be satisfied \textit{exactly} due to the lack of this controllability, the abstract system~$\Sigma^\prime$ can still be controlled to \textit{approximately} recover the expected output trajectory with a guaranteed output error bound. Finally, note that as indicated by Theorems~\ref{thm:mRelateDelta} and~\ref{thm:mCondi}, although the condition~\eqref{equ:mRelatSuffiOut} is not straightforward to satisfy, the other conditions~\eqref{equ:mRelatSuffiDiff} and~\eqref{equ:mRelatSuffiIn} can be guaranteed by augmenting the rank of $\delta$, \textit{i.e.}, one can improve the input-output controllability by enlarging the input space of the abstract system~$\Sigma^\prime$ (to even cover the entire state space). 

For the above reasons, most existing literature in the ASHC field omits the study of $m$-relation.

\section{Illustrative Example}\label{sec:example}
In this section, we illustrate the results presented in this paper with an example. To this end, we consider the ASHC problem of a DC-to-DC Ćuk converter, and we solve the problem following the design procedures shown in Fig.~\ref{fig:Flow_Chart}.

\begin{figure}[tbp]
\begin{centering}
    \includegraphics[width=0.92\linewidth]{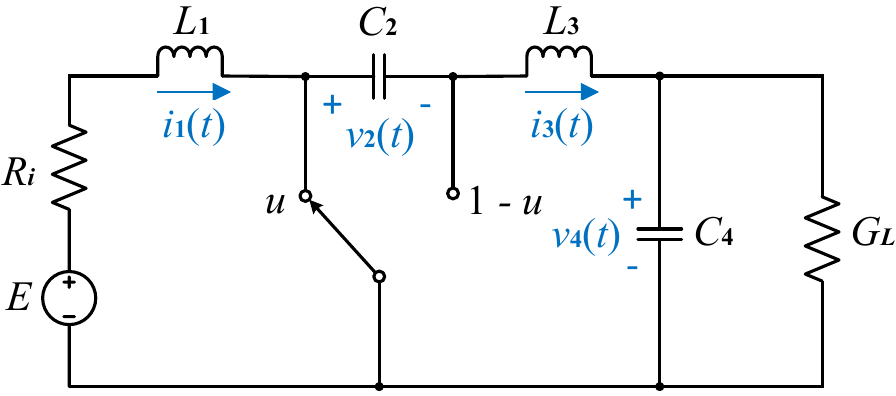}
    \caption{DC-to-DC Ćuk Converter.}
    \label{fig:Converter_Circuit}
\end{centering}
\end{figure}
\subsection{System Modelling}
Consider a DC-to-DC Ćuk converter depicted in Fig.~\ref{fig:Converter_Circuit}, where $E$ represents the value of constant input voltage, $R_i$ the input resistance, and $G_L$ the load admittance. The converter consists of two inductors, $L_1$ and $L_3$, and two capacitors, $C_2$ and $C_4$. The currents flowing through the two inductors are indicated with $i_1$ and $i_3$, respectively, while the voltages across the two capacitors are denoted by $v_2$ and $v_4$, respectively. The input $u$ denotes the duty cycle of a PWM signal controlling the position of the switch in the converter, and therefore satisfies $0 \leq u \leq 1$. The averaged model of this converter circuit is given by the following equations.
\begin{equation}\label{equ:ConverterAveMod}
    \begin{aligned}
    L_1 \frac{d}{d t} i_1 & = -(1-u) v_2 + E - i_1 R_i, \\
    C_2 \frac{d}{d t} v_2 & = (1-u) i_1 + u i_3, \\
    L_3 \frac{d}{d t} i_3 & = -u v_2 - v_4, \\
    C_4 \frac{d}{d t} v_4 & = i_3 - G_L v_4.
    \end{aligned}
\end{equation}
Now denote $\mathbf{x} = [\mathbf{x}_1, \mathbf{x}_2, \mathbf{x}_3, \mathbf{x}_4]^\top$ with $\mathbf{x}_1 = i_1$, $\mathbf{x}_2 = v_2$, $\mathbf{x}_3 = i_3$, and $\mathbf{x}_4 = v_4$ as the states, and let $\mathbf{y} = v_4$ as the output. Then the averaged model~\eqref{equ:ConverterAveMod} can be modelled as a (concrete) nonlinear system~\eqref{equ:ConcreteSystem} satisfying the input-affine property~\eqref{equ:concreteAffine}, where
\begin{equation}\label{equ:ConvertSystem}
\begin{aligned}
    \bar{f}(x) \!&=\! \left[\!\begin{array}{c}
        \frac{1}{L_1}(-R_i x_1 \!-\! x_2 \!+\! E)  \\
        \frac{1}{C_2} x_1 \\
        -\frac{1}{L_3} x_4 \\
        \frac{1}{C_4}(x_3 \!-\! G_L x_4)
    \end{array}\!\right]\!\!, \quad 
    g(x) \!= \!\left[\!\!\begin{array}{c}
        \frac{1}{L_1} x_2  \\
        \frac{1}{C_2} (-\!x_1 \!+\! x_3) \\
        -\frac{1}{L_3} x_4 \\
        0
    \end{array}\!\!\right]\!\!, \\
    h(x) \!&= x_4 = C x,   \qquad C = \left[\begin{array}{cccc}
        0  &  0  &  0  &  1
    \end{array}\right],
\end{aligned}
\end{equation}
with dimensions $n = 4$, $m = p = 1$, and input space $\mathbf{U} = [0, 1]$. Note that the output mapping $h$ satisfies Assumption~\ref{asmp:outputLipstchiz}. Furthermore, this concrete system can alternatively be formulated as
\begin{equation}\label{equ:ConvertSystemLinear}
    f(x, u) = \bar{A}(u)x + \bar{b}
\end{equation}
with 
\begin{equation*}
   \bar{A}(u) \!=\! \left[\!\begin{array}{cccc}
        -\frac{R_i}{L_1}  &  -\frac{1-u}{L_1}  &  0  &  0 \\
        \frac{1-u}{C_2}  &  0  &  \frac{u}{C_2}  &  0 \\
        0  &  -\frac{u}{L_3}  &  0  &  -\frac{1}{L_3} \\
        0  &  0  &  \frac{1}{C_4}  &  -\frac{G_l}{C_4}
    \end{array}\!\right]\!\!, \quad 
    \bar{b} \!= \!\left[\!\begin{array}{c}
        \frac{E}{L_1}  \\
        0 \\
        0 \\
        0
    \end{array}\!\right]\!\!,
\end{equation*}
where $\bar{A}$ is a linear function of $u$. This formulation~\eqref{equ:ConvertSystemLinear} will be used later. In this example, we arbitrarily set the values of the electrical components as listed in Table~\ref{tab:parameter}. With such a selection of parameters, the converter is expected to output voltages between 0 and -120, \textit{i.e.}, $\mathbf{Y} = [-120, 0]$. Consequently, as $y = h(x) = x_4$, the state is expected to stay within the set $\mathbf{X}_y = \{ x \in \mathbb{R}^n \,|\, h(x) \in \mathbf{Y}\} = \mathbb{R}^3 \times [-120, 0]$.

\begin{table}[tbp]
    \centering
    \begin{tabular}{|c|c|c|c|}
    \hline $R_i = 0.05\,\Omega$ & $L_1=10\, \mathrm{mH}$ & $C_2= 11\,\mathrm{mF}$ &  $E=12\,\mathrm{V}$ \\
    \hline $G_L=44.7\,\mathrm{mS}$ & $L_3=10\, \mathrm{mH}$ & $C_4= 11 \,\mathrm{mF}$ & \\
    \hline
    \end{tabular}
    \caption{Parameters of the DC-to-DC Ćuk converter model.}
    \label{tab:parameter}
\end{table}

We now show that this converter system is asymptotically stabilisable. Since the concrete system is dissipative and can be formulated as~\eqref{equ:ConvertSystemLinear}, following~\cite[Theorem 3]{ref:girard2007approximate}, there exists a matrix $M \in \mathbb{R}^{n \times n}$ and a strictly positive scalar $\lambda$ such that
\begin{equation}\label{equ:egLMI}
    \bar{A}^\top(u) M + M \bar{A}(u) \preceq -\lambda M, \qquad  M \succ \bm{0}_{n \times n}.
\end{equation}
for all $u \in [0, 1]$. Since $\bar{A}$ is a linear function of $u$, the matrix $M$ can be computed, with parameter values in Table~\ref{tab:parameter}, by solving the linear matrix inequalities~\eqref{equ:egLMI} for both $u = 0$ and $u = 1$. Using the solver MOSEK~\cite{ref:mosek} with the YALMIP toolbox~\cite{ref:YALMIP}, we obtain
\begin{equation}\label{equ:egM}
M = \left[\begin{array}{rrrr}
0.4804 & 0.0102 & 0.0002 & -0.0093 \\
0.0102 & 0.5304 & 0.0081 & 0.0001 \\
0.0002 & 0.0081 & 0.4824 & -0.0135 \\
-0.0093 & 0.0001 & -0.0135 & 0.5304
\end{array}\right]
\end{equation}
with $\lambda = 2$. As a consequence, now we can show that the stabilisability Lyapunov function
\begin{equation}\label{equ:egV}
    V(x, z) = \|x - z\|_M^2 = (x - z)^\top M (x - z)
\end{equation}
satisfies~\eqref{equ:stabilisableV} with $k(z, x, u) = u$. Let $\underline{\sigma}$ and $\overline{\sigma}$ be the smallest and largest eigenvalue of $M$ in~\eqref{equ:egM}, respectively. Then $V$ in~\eqref{equ:egV} satisfies~\eqref{equ:stabilisableCond1} as $\underline{\sigma} \|x - z\|^2 \leq V \leq \overline{\sigma} \|x - z\|^2$. Then to show~\eqref{equ:stabilisableCond2}, we compute
\begin{equation*}
\begin{aligned}
    \dot{V}(x, z) &\!=\! (x \!-\! z)\!^\top M(\dot{x} \!-\! \dot{z}) \!=\! 2(x \!-\! z)\!^\top M\left(\bar{A}(u)x \!-\! \bar{A}(u)z\right) \\
    &\!= 2(x - z)\!^\top M\bar{A}(u)(x - z).
\end{aligned}
\end{equation*}
Since $M$ satisfies~\eqref{equ:egLMI}, the last equation yields that $\dot{V}(x, z) \leq -\lambda V \leq -\lambda \underline{\sigma} \|x - z\|^2 $, \textit{i.e.},~\eqref{equ:stabilisableCond2} holds. In addition, this Lyapunov function $V$ is quadratic and satisfies~\eqref{equ:bounddVdx}, which, by Proposition~\ref{prop:condB}, can guarantee the satisfaction of condition~\textbf{(B)} if $\delta$ in~\eqref{equ:abstractAffine} is designed to be continuous.

\subsection{Solving~(\ref{equ:SimuDesign}) and~(\ref{equ:mRelatSuffi})}
Following the procedures shown in Fig.~\ref{fig:Flow_Chart}, the next step is to select $p$, $l$, and $\bar{\phi}$ solving~\eqref{equ:SimuDesignCond1}. Here we select
\begin{equation}\label{equ:egSimuDesign}
    \bar{\phi} = 0, \quad p(\xi) = 
    \left[\!\begin{array}{c}
    \frac{EG_L \xi^2 }{{\left(\xi -1\right)}^2 +G_L R_i \xi^2 }\\
    -\frac{E{\left(\xi -1\right)}}{{\left(\xi -1\right)}^2 +G_L R_i \xi^2 }\\
    \frac{EG_L \xi {\left(\xi -1\right)}}{{\left(\xi -1\right)}^2 +G_L R_i \xi^2 }\\
    \frac{E\xi {\left(\xi -1\right)}}{{\left(\xi -1\right)}^2 +G_L R_i \xi^2 }
    \end{array}\!\right], \quad l(\xi) = \xi,
\end{equation}
for all $\xi \in \mathbb{R}^{\hat{n}}$ with $\hat{n} = 1$. Note that with $\mathbf{X}_y = \mathbb{R}^3 \times [-120, 0]$ and $h$ in~\eqref{equ:ConvertSystem}, the function $p$ in~\eqref{equ:egSimuDesign} satisfies~\eqref{equ:mRelationImP}, but is not injective. We show later that $p$ is injective over a certain region of interest $\mathbf{V} \subset \mathbb{R}^{\hat{n}}$, and is such that condition~(ii) in Theorem~\ref{thm:mCondi} holds.
With the selection in~\eqref{equ:egSimuDesign}, we obtain
\begin{equation}\label{equ:egKappa}
    \kappa(\xi) = h(p(\xi)) = \frac{E\,\xi {\left(\xi -1\right)}}{{{\left(\xi -1\right)}}^2 +G_L R_i \xi^2 },
\end{equation}
which solves~\eqref{equ:SimuDesignCond2}. 
Note that in this case, the interface function in~\eqref{equ:Interface} satisfies $u_w = \xi + u^*(\xi, x, v)$ as $k(z, x, u) = u$. Since the input space $\mathbf{U} = [0, 1]$, we require $u_w = \xi + u^*(\xi, x, v) \in [0, 1]$.

After solving~\eqref{equ:SimuDesign}, we focus on the satisfaction of the $m$-relation formulated in~\eqref{equ:mRelatSuffi}. According to Fig.~\ref{fig:Flow_Chart}, we first find $m$ satisfying condition~(i) in Theorem~\ref{thm:mCondi}, which is an intermediate step for solving~\eqref{equ:mRelatSuffiOut}. To this end, consider that~\eqref{equ:mRelatSuffiOut} requires $h(x) = \kappa(m(x)) = x_4$ with $\kappa$ in~\eqref{equ:egKappa}. By simplifying this equation, we obtain 
\begin{equation}\label{equ:mEquation}
    [(G_L R_i + 1) x_4 - E] m^2 + (E - 2x_4) m + x_4 = 0,
\end{equation}
which is solvable whenever $x_4 \in [-\frac{E}{2\sqrt{R_i G_L}}, \frac{E}{2\sqrt{R_i G_L}}]$. By substituting the values in Table~\ref{tab:parameter}, we get $[-\frac{E}{2\sqrt{R_i G_L}}, \frac{E}{2\sqrt{R_i G_L}}] = [-126.91, 126.91] \supset \mathbf{Y} = [-120, 0]$. Recall that $\mathbf{X}_y = \mathbb{R}^3 \times [-120, 0]$. This means that when the converter is operating on the state set $\mathbf{X}_y$ and generating the expected output $\mathbf{y}(t) \in \mathbf{Y}$,~\eqref{equ:mEquation} is always solvable. Note that~\eqref{equ:mEquation} has two analytical solutions as the function $p$ (and consequently $\kappa(\xi) = h(p(\xi))$) is not injective for all $\xi \in [0, 1]$, as mentioned earlier and shown in Fig.~\ref{fig:kappa_xi}. However, this figure indicates that to generate outputs $y$ in $\mathbf{Y} = [-120, 0]$, it suffices to only focus on $\xi \in [0, 0.95]$. Thus, we restrict the domain to $\mathbf{V} = [0, 0.95]$, which satisfies~\eqref{equ:SimuLocalRequire} and is such that the function $p$, as well as $\kappa = h \circ p$, is injective. Then, we determine
\begin{equation}\label{equ:egm}
    m(x) = \frac{-E + 2x_4 + \sqrt{{E}^2 -4G_L R_i {x_4 }^2 }}{2(G_L R_i + 1) x_4 - 2E },
\end{equation}
which is the left inverse function of $p$ on $\mathbf{V} = [0, 0.95]$ as $m(p(\xi)) = \xi$. In fact, this function $m$ also satisfies condition~(ii) as
\begin{equation*}
    x - p(m(x)) = [*_1, *_2, *_3, 0] \in \operatorname{Ker}(h),
\end{equation*}
for some values $*_1, *_2, *_3 \in \mathbb{R}$. Then the satisfaction of condition~(ii) follows from Corollary~\ref{corol:mRelateGeo}. 
Finally, by Theorem~\ref{thm:mRelateDelta}, we can simply have $\delta = 1$, which guarantees the satisfaction of~\eqref{equ:mRelatSuffiDiff} and~\eqref{equ:mRelatSuffiIn}. Note also that by Proposition~\ref{prop:condB}, this design of $\delta$ ensures the satisfaction of condition~\textbf{(B)} in Theorem~\ref{thm:SimuDesign}, thereby guaranteeing the existence of a simulation function. This concludes the design. The obtained abstract system solves Problem~\ref{prob:ASHC}.

\begin{figure}[tbp]
\begin{centering}
    \includegraphics[width=\linewidth]{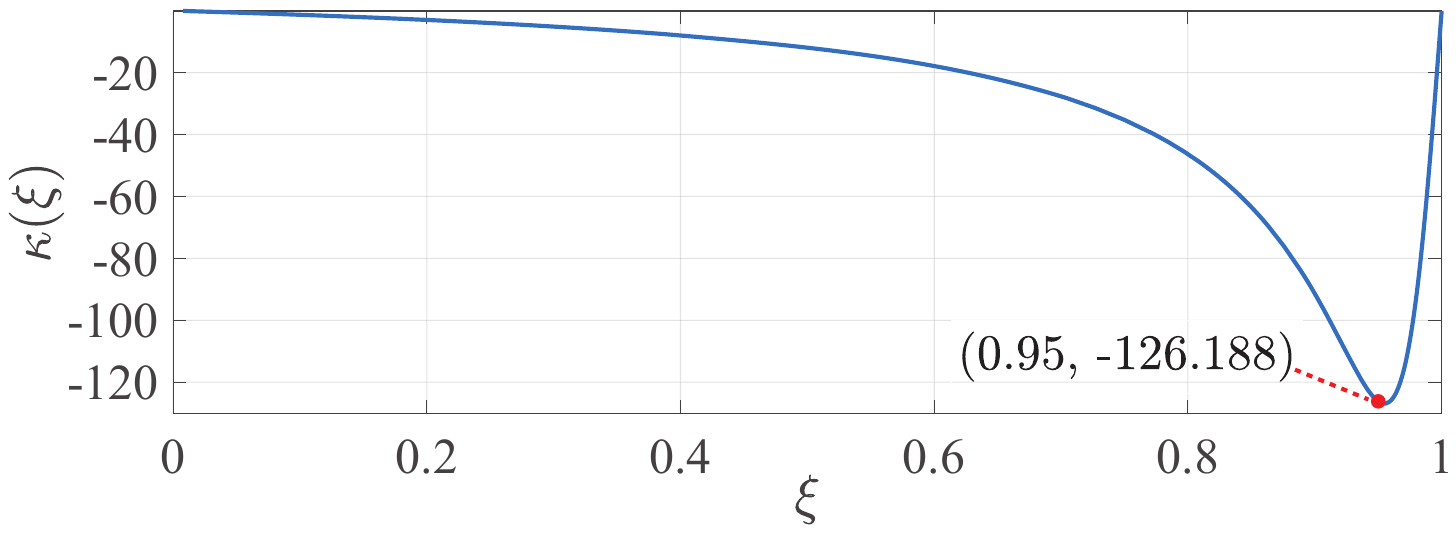}
    \caption{Plot of $\kappa(\xi)$ versus $\xi \in [0, 1]$ with parameter values in Table~\ref{tab:parameter}.}
    \label{fig:kappa_xi}
\end{centering}
\end{figure}

\subsection{Simulation Function and Error Bound}
By following Fig.~\ref{fig:Flow_Chart}, we have constructed an ASHC framework, and the final step is to characterise and reduce the error bound in~\eqref{equ:ErrorBound} by studying the simulation function. With $V$ in~\eqref{equ:egV}, the candidate simulation function takes the form 
\begin{equation}\label{equ:egSimuFunc}
    W(\xi, x) = V(p(\xi), x) = (p(\xi) - x)\!^\top M (p(\xi) - x).
\end{equation}
By Theorem~\ref{thm:SimuDesign}, this simulation function satisfies~\eqref{equ:SimuFuncCondi1}. To check~\eqref{equ:SimuFuncCondi2}, we first note that by straightforward computations, we have
\begin{equation}\label{equ:egConditionB}
\begin{aligned}
    &\!\!\!\frac{\partial W}{\partial p(\xi)} \frac{\partial p(\xi)}{\partial \xi} \delta(\xi)v + \frac{\partial W}{\partial x} g(x) u^* \\
    &\!\!\!= 2(p(\xi) - x)^\top M \left(\frac{\partial p(\xi)}{\partial \xi} \delta(\xi)v - g(x) u^*\right) \\
    &\!\!\!\leq \varepsilon \|p(\xi) - x\|_M^2 + \frac{1}{\varepsilon}\left\|\sqrt{M} \left(\frac{\partial p(\xi)}{\partial \xi} \delta(\xi)v - g(x) u^*\!\right)\right\|^2 \\
    &\!\!\!\leq \varepsilon \|p(\xi) \!-\! x\|_M^2 \!+\! \frac{1}{\varepsilon}\left\|\sqrt{M} \left(\frac{\partial p(\xi)}{\partial \xi} \delta(\xi) \!-\! g(x) q(\xi, x)\!\right)\right\|^2 \!\!\!\|v\|^2,
\end{aligned}
\end{equation}
where we have used the same simplification steps as in~\eqref{equ:SimuResBoundYoung} with $\varepsilon > 0$, and we have designed $u^*$ as in~\eqref{equ:uMinBound}. Then we obtain
\begin{equation*}
\begin{aligned}
    &\nabla W(\xi, x)^T\left[\begin{array}{c}\phi(\xi, v) \\
    f(x, u_w(\xi, x, v))
    \end{array}\right] \\
    &=2\|p(\xi) - x\|_{M \bar{A}(l(\xi))}^2 + \frac{\partial W}{\partial p(\xi)} \frac{\partial p(\xi)}{\partial \xi} \delta(\xi)v + \frac{\partial W}{\partial x} g(x) u^* \\
    &\leq \!-(\lambda \!-\! \varepsilon) W\!(\xi, x) \!+\! \frac{1}{\varepsilon}\!\left\|\sqrt{M}\! \left(\!\!\frac{\partial p(\xi)}{\partial \xi} \delta(\xi) \!-\! g(x) q(\xi, x)\!\!\right)\!\right\|^2 \!\!\!\!\|v\|^2\!.
\end{aligned}
\end{equation*}
In this manner, with any $\varepsilon < \lambda$,~\eqref{equ:SimuFuncCondi2} holds if the function $q$ is designed such that there exists a positive constant $\bar{d}$ satisfying 
$\|\vartheta(\xi, x)\| \leq \bar{d}$ with
\begin{equation}\label{equ:iota}
    \vartheta(\xi, x) := \sqrt{M} \left(\!\frac{\partial p(\xi)}{\partial \xi} \delta(\xi) \!-\! g(x) q(\xi, x)\!\right)
\end{equation}
for all $(\xi, x) \in \mathbf{V} \times \mathbb{R}^n$. In fact, as shown by Proposition~\ref{prop:condB}, this constant $\bar{d}$ exists even when $q \equiv 0$, as the term $\|\frac{\partial p(\xi)}{\partial \xi} \delta(\xi)\|$ with $p$ in~\eqref{equ:egSimuDesign} and $\delta \equiv 1$ is bounded for $\xi \in \mathbf{V} = [0, 0.95]$. With parameter values in Table~\ref{tab:parameter}, this boundedness is illustrated in Fig.~\ref{fig:delta_Compare} (left), which displays the norm of $\vartheta(\xi, x)$ with $q \equiv 0$, showing that the constant $\bar{d} = 1760$ satisfies $\|\vartheta(\xi, x)\| \leq \bar{d}$ for all $(\xi, x) \in \mathbf{V} \times \mathbb{R}^n$. 

\begin{figure}[tbp]
\begin{centering}
    \includegraphics[width=\linewidth]{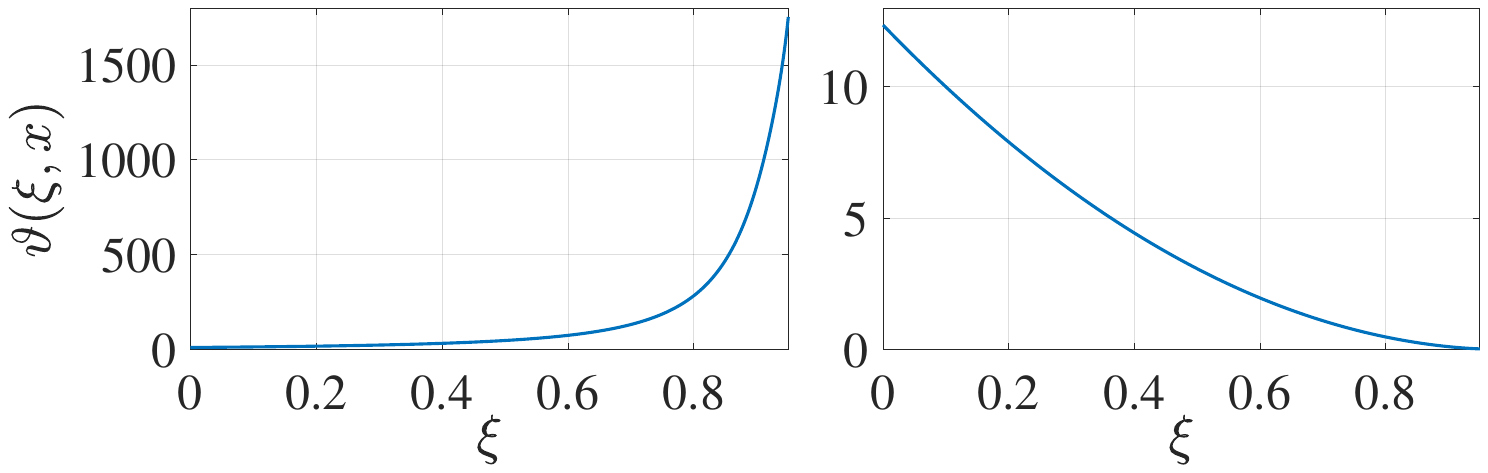}
    \caption{Plots of $\|\vartheta(\xi, x)\|$ versus $\xi \in [0,0.95]$ when $q \equiv 0$, with $\delta = 1$ (left) and $\delta$ in~\eqref{equ:egDelta} (right).}
    \label{fig:delta_Compare}
\end{centering}
\end{figure}

While the above designs solve the ASHC problem, the bound of the output discrepancy in~\eqref{equ:ErrorBound} can be reduced. To this end, note that,~\eqref{equ:egConditionB} implies that $\eta(r) = (\lambda - \varepsilon)r$ and $\gamma(r) = \frac{1}{\varepsilon}\|\vartheta(\xi, x)\|^2\|r\|^2$ in~\eqref{equ:ErrorBound} with $\alpha_h(r) = 0.52r^2$ satisfying~\eqref{equ:SimuFuncCondi1}. This can be seen by the fact that since $M$ in~\eqref{equ:egM} satisfies $M \succeq 0.52C^\top C$ with $C$ in~\eqref{equ:ConvertSystem}, $W(\xi, x) = (p(\xi) - x)\!^\top M (p(\xi) - x) \geq 0.52 (p(\xi) - x)\!^\top C^\top C (p(\xi) - x) = 0.52\|\kappa(\xi) - h(x)\|^2$. Consequently, 
\begin{equation}\label{equ:egErrorBound}
    \alpha_h^{-1}(\eta^{-1}(2\gamma(\|\mathbf{v}\|_{\infty}))) = \sqrt{\frac{2}{0.52(\lambda - \varepsilon)\varepsilon}}\|\vartheta(\xi, x)\| \|\mathbf{v}\|_\infty. 
\end{equation}
Therefore, reducing the bound~\eqref{equ:ErrorBound} boils down to minimising the term $\sqrt{\frac{1}{0.26(\lambda - \varepsilon)\varepsilon}}\|\vartheta(\xi, x)\|$, in which the first multiplier is minimised when $\varepsilon = 1$ as $\lambda = 2$. Then the task comes down to minimising the term $\|\vartheta(\xi, x)\|$.

One way of minimising $\|\vartheta(\xi, x)\|$ is by redesigning $\delta$. Previously, with $\delta = 1$, Fig.~\ref{fig:delta_Compare} shows that $\|\vartheta(\xi, x)\|$ increases fast when $\xi$ is close to 0.95. By observing the function $\frac{\partial p(\xi)}{\partial \xi}$ (obtainable from~\eqref{equ:egSimuDesign}), this growth can be reduced by redesigning $\delta$ as
\begin{equation}\label{equ:egDelta}
    \delta(\xi) = ({{\left(\xi -1\right)}}^2 +G_L R_i \,\xi^2 )^2.
\end{equation}
Consequently, as shown by Fig.~\ref{fig:delta_Compare} (right), $\|\vartheta(\xi, x)\|$ with $q \equiv 0$ is greatly reduced, with the bound $\bar{d} = 12.35$. Such a modification in $\delta$ brings two advantages. On the one hand, the reduction in $\|\vartheta(\xi, x)\|$ allows more flexibility in designing the input $\mathbf{v}$ that controls the abstract system, as $\|\mathbf{v}\|_{\infty}$ can be larger while resulting in the same bound~\eqref{equ:egErrorBound} as before. On the other hand, as shown in Fig.~\ref{fig:delta_Compare}, $\delta$ in~\eqref{equ:egDelta} leads to a flatter curve of $\vartheta(\xi, x)$. Then the derived bound~\eqref{equ:egErrorBound}, computed by taking the maximum value $\bar{d}$ of $\vartheta(\xi, x)$, yields a less conservative estimate. Note that this new design of $\delta$ does not violate the satisfaction of the $m$-relation, as $\delta(\xi)$ in~\eqref{equ:egDelta} is invertible for all $\xi \in \mathbf{V}$. In fact, by the proof of Theorem~\ref{thm:mRelateDelta}, the functions $b$ and $c$ solving~\eqref{equ:mRelatSuffiDiff}--\eqref{equ:mRelatSuffiIn} are of the form
\begin{equation}\label{equ:egMrelateLink}
\begin{aligned}
    b(x) &= -\frac{2{\left(x_3 -G_L x_4 \right)}{{\left(x_4 -E+G_L R_i x_4 \right)}}^2 }{C_4 E \chi {\left(E+\chi-G_L R_i \chi+E G_L R_i -4G_L R_i x_4 \right)}}, \\
    c(x) &= 0,
\end{aligned}
\end{equation}
where $\chi = {E}^2 -4G_L R_i {x_4}^2$.

Another way of minimising $\|\vartheta(\xi, x)\|$, as mentioned in Section~\ref{subsec:BODdesign}, is to manipulate $u^*$. In this example, with $u^*$ in~\eqref{equ:uMinBound} and $\vartheta$ in~\eqref{equ:iota}, the minimisation of  $\|\vartheta(\xi, x)\|$ becomes a least square approximation problem. Whenever $\operatorname{rank} (g(x)) = 1$, \textit{i.e.}, $g(x) \neq \bm{0}_{4 \times1}$, the minimum of $\|\vartheta(\xi, x)\|$ is reached by having $q(\xi, x) = (g\!^\top\!(x) M g(x))^{-1}g\!^\top\!(x) M\frac{\partial p(\xi)}{\partial \xi} \delta(\xi)$. This result is a direct extension of~\cite[Proposition 1]{ref:girard2009hierarchical} to the nonlinear case.

In summary, to solve the ASHC problem of the converter, we can design the abstract system~$\Sigma^\prime$ as
\begin{equation}\label{equ:egAbstract}
\Sigma^\prime:
\left\{\begin{aligned}
    \dot{\bm{\xi}} &= ({{\left(\bm{\xi} -1\right)}}^2 +G_L R_i \bm{\xi}^2 )^2 \mathbf{v}, \\
    \bm{\psi} & = \frac{E\bm{\xi} {\left(\bm{\xi} -1\right)}}{{{\left(\bm{\xi} -1\right)}}^2 +G_L R_i \bm{\xi}^2 },
\end{aligned}\right.
\end{equation}
and the interface function $u_w$ takes the form
\begin{equation}\label{equ:egInterface}
    u_w(\xi, x, v) = \xi + q(\xi, x) v,
\end{equation}
with
\begin{equation}\label{equ:egqLS}
    q(\xi, x) \!=\! \left\{\!\!\! 
    \begin{array}{cc}
        (g\!^\top\!(x) M g(x))^{-\!1}g\!^\top\!(x)M\frac{\partial p(\xi)}{\partial \xi} \delta(\xi), \!& g(x) \neq \bm{0}_{4 \times 1}; \\
        0, \!& g(x) = \bm{0}_{4 \times 1}.
    \end{array}\right.
\end{equation}

\begin{remark}
   Recall that the input space is $\mathbf{U} = [0, 1]$. There may not be a guarantee on the fact that $u_\omega$ in~\eqref{equ:egInterface} stays inside $\mathbf{U}$ for all times.
   In such a case, one can instead design $u_\omega$ as \begin{equation}\label{equ:egInterfaceSat}
       u_\omega = \operatorname{sat}\!\left(\xi + q(\xi, x) v\right),
    \end{equation}
   where the saturation function $\operatorname{sat}(r) := \max(0, \min(1, r))$. Since $\xi$ always stays within $\mathbf{V} = [0, 0.95] \subset \mathbf{U}$, the design in~\eqref{equ:egInterfaceSat} can be regarded as adding a limitation on the norm of $q(\xi, x)$ for minimising the output discrepancy, which does not violate the bounded discrepancy result established before.
\end{remark}


\begin{figure}[tbp]
\begin{centering}
    \includegraphics[width=\linewidth]{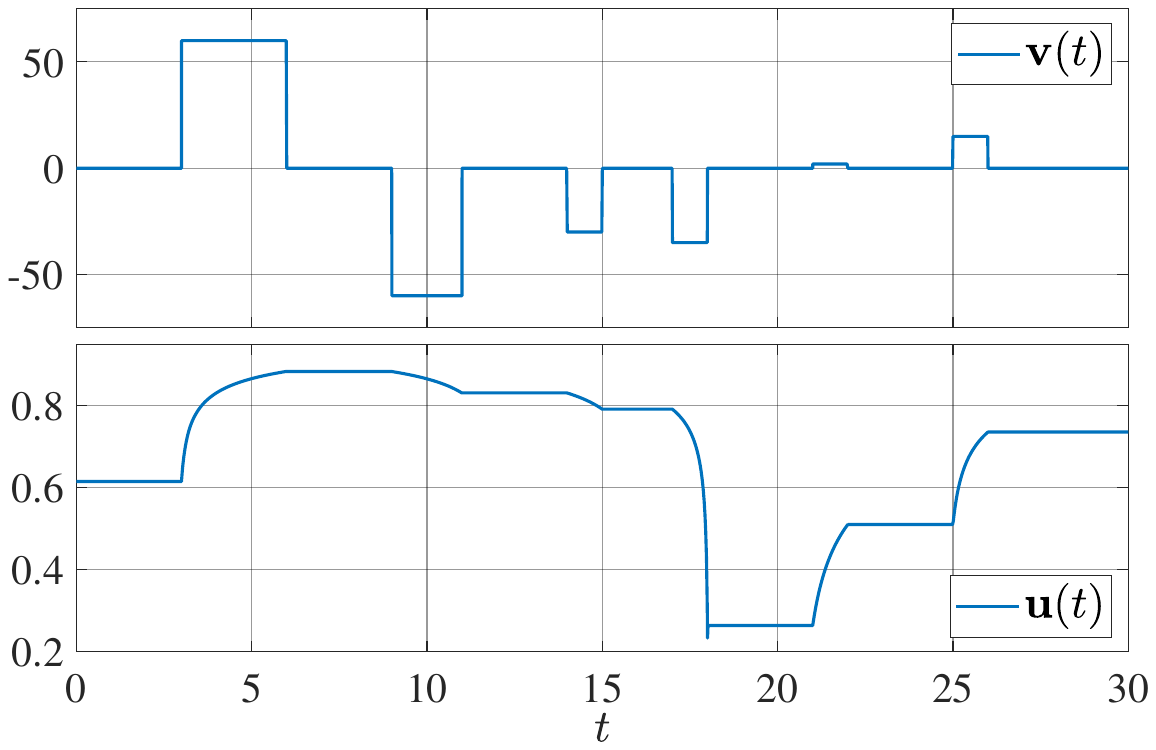}
    \caption{Time histories of inputs $\mathbf{v}$ (top) and $\mathbf{u}$ (bottom) driving systems~$\Sigma^\prime$ and $\Sigma$, respectively, in the ASHC framework.}
    \label{fig:Example_BoundedIn}
\end{centering}
\end{figure}

\subsection{Simulation Result}
Now we are ready to show the simulation results of the ASHC technique based on the proposed abstract system~\eqref{equ:egAbstract} and interface~\eqref{equ:egInterface}. We first illustrate the bounded output discrepancy property. Then we show the satisfaction of the $m$-relation. All simulations are conducted using MATLAB.

In this example, the objective is to control the concrete system~$\Sigma$ (in this case, the converter~\eqref{equ:ConverterAveMod}) to output the specific voltage levels as $$[-19.11, -80.90, -44.27, -4.31, -12.48, -32.91].$$ To this end, we control the abstract system~$\Sigma^\prime$ and synthesise the controller of the concrete system~$\Sigma$ via the ASHC technique. The initial condition of the abstract system~$\Sigma^\prime$ is randomly selected as $\bm{\xi}(t_0) = \xi_0 = 0.6156$, while that of the concrete system~$\Sigma$ is set as $\textbf{x}(t_0) = x_0 = p(\xi_0) = [1.3678, 31.0396, -0.8541, -19.1080]^\top$, where $t_0  = 0$ in this simulation. Note that this special selection of $\textbf{x}(t_0)$ is not necessary, and is performed here only for illustrative purposes to be explained later. The simulation results of the bounded output discrepancy are displayed by Figs.~\ref{fig:Example_BoundedIn} and~\ref{fig:Example_BoundedOut}. Fig.~\ref{fig:Example_BoundedIn} shows the time histories of the input $\mathbf{v}$ driving the abstract system~$\Sigma^\prime$ (top) and the input $\mathbf{u}$ that controls the concrete system~$\Sigma$ (bottom).  Fig.~\ref{fig:Example_BoundedOut} depicts the time histories of the outputs $\mathbf{y}$ and $\bm{\psi}$ of systems~$\Sigma$ and~$\Sigma^\prime$, respectively, (top), and the time history of their error $\mathbf{e_y} = \bm{\psi} - \mathbf{y}$ (bottom). The results confirm that the control input $\mathbf{u}$ varies within $\mathbf{U} = [0, 1]$, and the output error between the two systems is bounded, implying that the concrete system $\Sigma$ approximately achieves its control objective. 
Note also that by having the initial condition $x_0 = p(\xi_0)$, the ASHC problem starts with the simulation function $W(\xi, x) = 0$, meaning that the first term of the error bound in~\eqref{equ:ErrorBound} is eliminated. Hence, the error bound only depends on $\|\mathbf{v}\|_\infty$. 

\begin{figure}[tbp]
\begin{centering}
    \includegraphics[width=\linewidth]{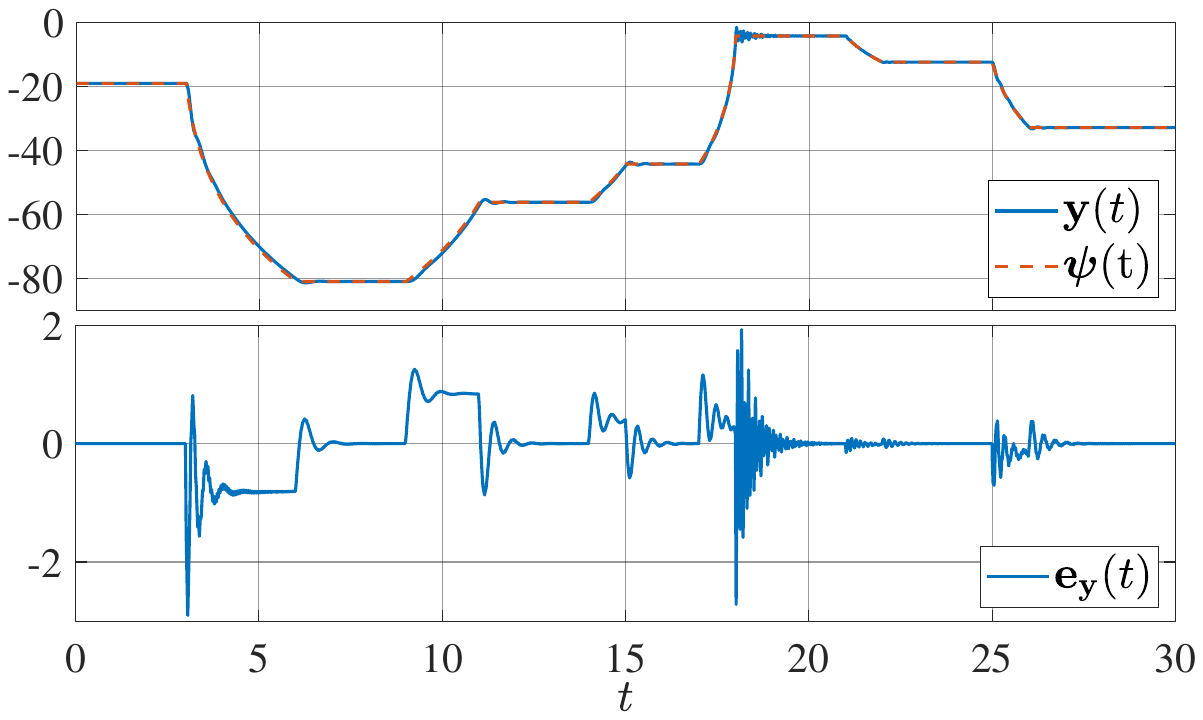}
    \caption{Time histories of outputs $\mathbf{y}$ (blue solid) and $\bm{\psi}$ (red dashed) of systems~$\Sigma$ and~$\Sigma^\prime$ (top) and their error $\mathbf{e_y} = \bm{\psi} - \mathbf{y}$ (bottom).}
    \label{fig:Example_BoundedOut}
\end{centering}
\end{figure}

\begin{remark}
    In this example, as $\|\mathbf{v}\|_\infty = 60$, the computation of the error bound via~\eqref{equ:egErrorBound} yields $\|\mathbf{e_y}\| \leq \sqrt{\frac{1}{0.26}}\times 12.35 \times 60 = 1453.22$, which is not tight. A reason for this is that the bound~\eqref{equ:egErrorBound} is computed at the maximum point of $\|\vartheta(\xi, x)\|$ corresponding to $\xi=0.95$, see Fig.~\ref{fig:delta_Compare} (right). However, the state of system~$\Sigma^\prime$ generally does not stay around this extreme point, and the real error is much smaller. Another reason is the use of $u^*$ for minimising the bound. The bound is computed according to $\|\vartheta(\xi, x)\|$ in Fig.~\ref{fig:delta_Compare} (right) with $q \equiv 0$ and therefore $u^* \equiv 0$. However, we have designed $q$ as in~\eqref{equ:egqLS}, so $\|\vartheta(\xi, x)\|$ is further reduced. This reduction is not taken into account when computing the error bound.
\end{remark}

\begin{figure}[tbp]
\begin{centering}
    \includegraphics[width=\linewidth]{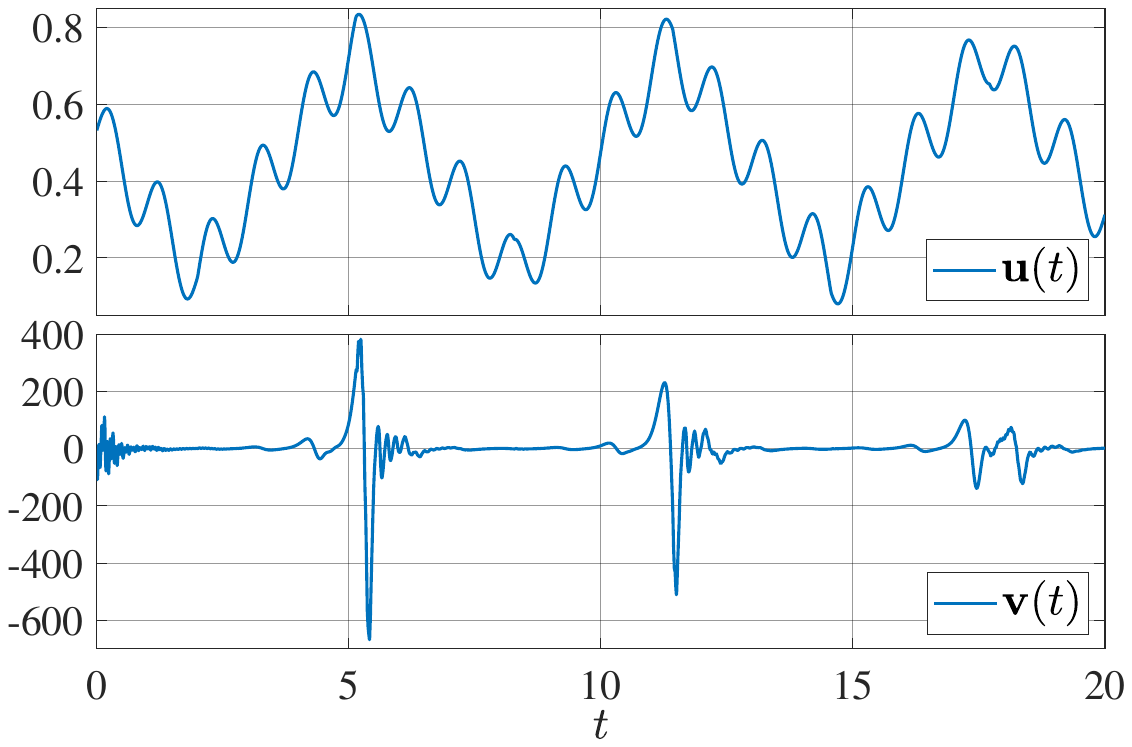}
    \caption{Time histories of inputs $\mathbf{u}$ (top) and $\mathbf{v}$ (bottom) into systems~$\Sigma$ and $\Sigma^\prime$, respectively, in the interconnection in Fig.~\ref{fig:mRelateInterconnect}.}
    \label{fig:Example_mRelationIn}
\end{centering}
\end{figure}

We then illustrate the satisfaction of the $m$-relation by implementing the interconnection in Fig.~\ref{fig:mRelateInterconnect}, where the functions $b$ and $c$ for constructing the link~\eqref{equ:mRelateLink} are as~\eqref{equ:egMrelateLink}. In this case, following the proof of Theorem~\ref{thm:mRelatOutMatch}, systems~$\Sigma$ and~$\Sigma^\prime$ are initialised on the manifold $\mathcal{M}_s = \{(x, \xi) \in (\mathbf{X}_y \times \mathbb{R}^{\hat{n}}) \,|\, \xi = m(x) \}$, with a randomly selected $$\mathbf{x}(t_0) = x_0 = [10.3256, 2.0561, -4.9785, -6.9732]^\top$$ and with $\bm{\xi}(t_0) = \xi_0 = m(x_0) = 0.3677$. The input $u$ is arbitrarily selected as $u = 0.3\nabla(t-2) + 0.1\sin(2\pi(t - 2)) + 0.45$, where $\nabla(t) \triangleq \frac{2}{\pi}\int_0^t \operatorname{sign}(\sin (\tau))d\tau-1$. Fig.~\ref{fig:Example_mRelationIn} displays the time histories of the input $u$ of system~$\Sigma$ (top) and the input $v$ in~\eqref{equ:mRelateLink} of system~$\Sigma^\prime$ (bottom). As a result, Fig.~\ref{fig:Example_mRelationOut} displays the time histories of the outputs $\mathbf{y}$ (solid blue) and $\bm{\psi}$ (red dashed) of systems~$\Sigma$ and $\Sigma^\prime$, respectively, showing that the outputs of two systems exactly match each other for all times. In other words, this result suggests that the $m$-relation successfully preserves the ability of the abstract system to recover the input-output controllability of the concrete system. 

\begin{figure}[tbp]
\begin{centering}
    \includegraphics[width=\linewidth]{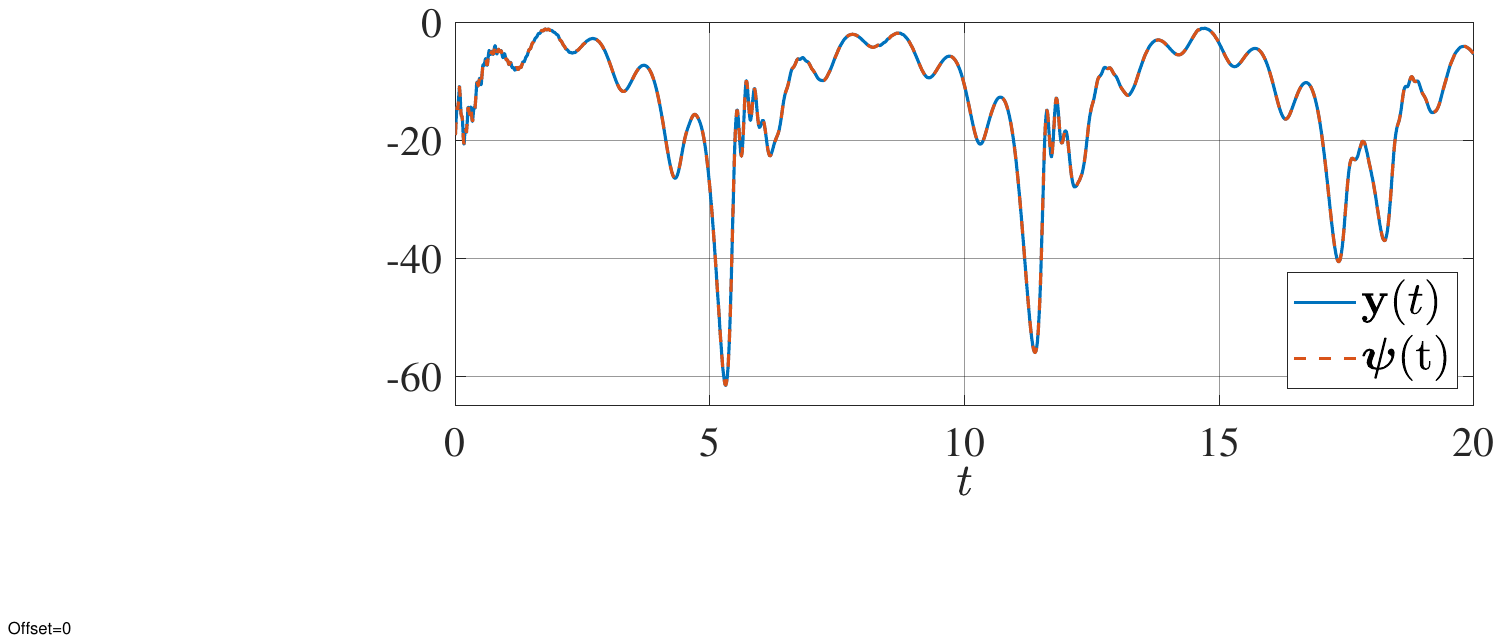}
    \caption{Time histories of outputs $\mathbf{y}$ (blue solid) and $\bm{\psi}$ (red dashed) of systems~$\Sigma$ and~$\Sigma^\prime$ in the interconnection in Fig.~\ref{fig:mRelateInterconnect}.}
    \label{fig:Example_mRelationOut}
\end{centering}
\end{figure}

\section{Conclusion}\label{sec:concl}
This paper has addressed the ASHC problem of nonlinear systems by providing methods for designing an abstract system and an interface that satisfy two key requirements: bounded output discrepancy and $m$-relation. Regarding the bounded output discrepancy requirement, we have proposed a PDE-based solution under the assumptions of asymptotic stabilisability of the concrete system and Lipschitz continuity of its output mapping (Theorem~\ref{thm:SimuDesign}). We have then shown that the $m$-relation condition can be satisfied by another PDE-based solution (Lemma~\ref{lem:mRelatePDE}). By combining the two proposed solutions, the overall solvability has been studied (Theorems~\ref{thm:mRelateDelta} and~\ref{thm:mCondi}), with the complete design procedure of the ASHC problem summarised in a flow chart (Fig.~\ref{fig:Flow_Chart}).
All the proposed results have been illustrated by a practical example. Note that while most results focus on input-affine concrete systems, the proposed solutions to the bounded output discrepancy requirement can be extended to general nonlinear systems, while the $m$-relation, in general, is not a core criterion in the ASHC field, as discussed at the end of Section~\ref{subsec:steps}.




\section*{References}
\bibliographystyle{IEEEtran}
\bibliography{IEEEabrv,mybib}

\begin{thebibliography}{10}
\providecommand{\url}[1]{#1}
\csname url@samestyle\endcsname
\providecommand{\newblock}{\relax}
\providecommand{\bibinfo}[2]{#2}
\providecommand{\BIBentrySTDinterwordspacing}{\spaceskip=0pt\relax}
\providecommand{\BIBentryALTinterwordstretchfactor}{4}
\providecommand{\BIBentryALTinterwordspacing}{\spaceskip=\fontdimen2\font plus
\BIBentryALTinterwordstretchfactor\fontdimen3\font minus
  \fontdimen4\font\relax}
\providecommand{\BIBforeignlanguage}[2]{{%
\expandafter\ifx\csname l@#1\endcsname\relax
\typeout{** WARNING: IEEEtran.bst: No hyphenation pattern has been}%
\typeout{** loaded for the language `#1'. Using the pattern for}%
\typeout{** the default language instead.}%
\else
\language=\csname l@#1\endcsname
\fi
#2}}
\providecommand{\BIBdecl}{\relax}
\BIBdecl

\bibitem{ref:antoulas2005approximation}
A.~C. Antoulas, \emph{Approximation of large-scale dynamical systems}.\hskip
  1em plus 0.5em minus 0.4em\relax Philadelphia, USA: SIAM, 2005.

\bibitem{ref:scarciotti2024interconnection}
G.~Scarciotti and A.~Astolfi, ``Interconnection-based model order reduction - a
  survey,'' \emph{Eur. J. Control}, vol.~75, p. 100929, 2024.

\bibitem{ref:girard2009hierarchical}
A.~Girard and G.~J. Pappas, ``Hierarchical control system design using
  approximate simulation,'' \emph{Automatica}, vol.~45, no.~2, pp. 566--571,
  2009.

\bibitem{ref:zamani2017compositional}
M.~Zamani and M.~Arcak, ``Compositional abstraction for networks of control
  systems: A dissipativity approach,'' \emph{IEEE Trans. Control Netw. Syst.},
  vol.~5, no.~3, pp. 1003--1015, 2017.

\bibitem{ref:rungger2016compositional}
M.~Rungger and M.~Zamani, ``Compositional construction of approximate
  abstractions of interconnected control systems,'' \emph{IEEE Trans. Control
  Netw. Syst.}, vol.~5, no.~1, pp. 116--127, 2016.

\bibitem{ref:smith2020approximate}
S.~W. Smith, M.~Arcak, and M.~Zamani, ``Approximate abstractions of control
  systems with an application to aggregation,'' \emph{Automatica}, vol. 119, p.
  109065, 2020.

\bibitem{ref:song2022robust}
Z.~Song, V.~Kurtz, S.~Welikala, P.~J. Antsaklis, and H.~Lin, ``Robust
  approximate simulation for hierarchical control of piecewise affine systems
  under bounded disturbances,'' in \emph{Proc. 41st Amer. Control Conf.}, 2022,
  pp. 1543--1548.

\bibitem{ref:zamani2016approximations}
M.~Zamani, M.~Rungger, and P.~M. Esfahani, ``Approximations of stochastic
  hybrid systems: A compositional approach,'' \emph{IEEE Transactions on
  Automatic Control}, vol.~62, no.~6, pp. 2838--2853, 2016.

\bibitem{ref:zhong2024hierarchical}
B.~Zhong, M.~Arcak, and M.~Zamani, ``Hierarchical control for cyber-physical
  systems via general approximate alternating simulation relations,''
  \emph{IFAC-PapersOnLine}, vol.~58, no.~11, pp. 13--18, 2024.

\bibitem{ref:fu2013hierarchical}
J.~Fu, S.~Shah, and H.~G. Tanner, ``Hierarchical control via approximate
  simulation and feedback linearization,'' in \emph{Proc. 32nd Am. Control
  Conf.}, 2013, pp. 1816--1821.

\bibitem{ref:tabuada2008approximate}
P.~Tabuada, ``An approximate simulation approach to symbolic control,''
  \emph{IEEE Trans. Autom. Control}, vol.~53, no.~6, pp. 1406--1418, 2008.

\bibitem{ref:girard2007approximation}
A.~Girard and G.~J. Pappas, ``Approximation metrics for discrete and continuous
  systems,'' \emph{IEEE Trans. Autom. Control}, vol.~52, no.~5, pp. 782--798,
  2007.

\bibitem{ref:schilders2008model}
W.~H. Schilders, H.~A. Van~der Vorst, and J.~Rommes, \emph{Model order
  reduction: theory, research aspects and applications}.\hskip 1em plus 0.5em
  minus 0.4em\relax Berlin, Heidelberg: Springer, 2008, vol.~13.

\bibitem{ref:niu2025Briging}
Z.~Niu, M.~F. Shakib, and G.~Scarciotti, ``Bridging abstraction-based
  hierarchical control and moment matching: A conceptual unification,'' in
  \emph{Proc. 64th IEEE Conf. Decis. Control}, 2025.

\bibitem{ref:obinata2012model}
G.~Obinata and B.~D. Anderson, \emph{Model reduction for control system
  design}.\hskip 1em plus 0.5em minus 0.4em\relax Springer Science \& Business
  Media, 2012.

\bibitem{ref:meijer2022finite}
T.~J. Meijer, S.~A.~N. Nouwens, V.~S. Dolk, B.~de~Jager, and W.~P. M.~H.
  Heemels, ``Finite-horizon minimal realizations for model predictive control
  of large-scale systems,'' in \emph{Proc. 61st IEEE Conf. Decis. Control},
  2022, pp. 1136--1141.

\bibitem{ref:jansen2017use}
J.~D. Jansen and L.~J. Durlofsky, ``Use of reduced-order models in well control
  optimization,'' \emph{Optim. Eng.}, vol.~18, no.~1, pp. 105--132, 2017.

\bibitem{ref:angeli1999forward}
D.~Angeli and E.~D. Sontag, ``Forward completeness, unboundedness
  observability, and their {Lyapunov} characterizations,'' \emph{Syst. Control
  Lett.}, vol.~38, no. 4-5, pp. 209--217, 1999.

\bibitem{ref:sontag2013mathematical}
E.~D. Sontag, \emph{Mathematical Control Theory: Deterministic Finite
  Dimensional Systems}.\hskip 1em plus 0.5em minus 0.4em\relax Springer Science
  \& Business Media, 2013, vol.~6.

\bibitem{ref:angeli2002lyapunov}
D.~Angeli, ``A {L}yapunov approach to incremental stability properties,''
  \emph{IEEE Trans. Autom. Control}, vol.~47, no.~3, pp. 410--421, 2002.

\bibitem{ref:milner1989communication}
R.~Milner, \emph{Communication and Concurrency}.\hskip 1em plus 0.5em minus
  0.4em\relax Prentice-Hall, Inc., 1989.

\bibitem{ref:pappas2000hierarchically}
G.~J. Pappas, G.~Lafferriere, and S.~Sastry, ``Hierarchically consistent
  control systems,'' \emph{IEEE Trans. Autom. Control}, vol.~45, no.~6, pp.
  1144--1160, 2000.

\bibitem{ref:tabuada2005hierarchical}
P.~Tabuada and G.~J. Pappas, ``Hierarchical trajectory refinement for a class
  of nonlinear systems,'' \emph{Automatica}, vol.~41, no.~4, pp. 701--708,
  2005.

\bibitem{ref:jouffroy2010tutorial}
J.~Jouffroy and T.~I. Fossen, ``A tutorial on incremental stability analysis
  using contraction theory,'' \emph{Model. Identif. Control}, vol.~31, no.~3,
  p.~93, 2010.

\bibitem{ref:samari2025model}
B.~Samari, A.~Nejati, and A.~Lavaei, ``Model order reduction from data with
  certification,'' in \emph{Proc. 64th IEEE Conf. Decis. Control}, 2025, pp.
  5800--5805.

\bibitem{ref:kalise2018polynomial}
D.~Kalise and K.~Kunisch, ``Polynomial approximation of high-dimensional
  {H}amilton--{J}acobi--{B}ellman equations and applications to feedback
  control of semilinear parabolic {PDE}s,'' \emph{SIAM J. Sci. Comput.},
  vol.~40, no.~2, pp. A629--A652, 2018.

\bibitem{ref:kalise2020robust}
D.~Kalise, S.~Kundu, and K.~Kunisch, ``Robust feedback control of nonlinear
  {PDE}s by numerical approximation of high-dimensional
  {H}amilton--{J}acobi--{I}saacs equations,'' \emph{SIAM J. Appl. Dyn. Syst.},
  vol.~19, no.~2, pp. 1496--1524, 2020.

\bibitem{ref:faedo2021approximation}
N.~Faedo, G.~Scarciotti, A.~Astolfi, and J.~V. Ringwood, ``On the approximation
  of moments for nonlinear systems,'' \emph{IEEE Trans. Autom. Control},
  vol.~66, no.~11, pp. 5538--5545, 2021.

\bibitem{ref:doebeli2024polynomial}
C.~Doebeli, A.~Astolfi, D.~Kalise, A.~Moreschini, G.~Scarciotti, and J.~Simard,
  ``A polynomial approximation scheme for nonlinear model reduction by moment
  matching,'' \emph{arXiv preprint arXiv:2412.13371}, 2024.

\bibitem{ref:astolfi2010model}
A.~Astolfi, ``Model reduction by moment matching for linear and nonlinear
  systems,'' \emph{IEEE Trans. Autom. Control}, vol.~55, no.~10, pp.
  2321--2336, 2010.

\bibitem{ref:scarciotti2017nonlinear}
G.~Scarciotti and A.~Astolfi, ``Nonlinear model reduction by moment matching,''
  \emph{Found. Trends Syst. Control}, vol.~4, no. 3-4, pp. 224--409, 2017.

\bibitem{ref:lee2013introduction}
J.~M. Lee, \emph{Introduction to Smooth Manifolds}, 2nd~ed.\hskip 1em plus
  0.5em minus 0.4em\relax New York, NY, USA: Springer, 2012.

\bibitem{ref:girard2007approximate}
A.~Girard and G.~J. Pappas, ``Approximate bisimulation relations for
  constrained linear systems,'' \emph{Automatica}, vol.~43, no.~8, pp.
  1307--1317, 2007.

\bibitem{ref:mosek}
\BIBentryALTinterwordspacing
{MOSEK ApS}, \emph{The {MOSEK} {API} for {MATLAB} 11.0.28}, 2019, {A}ccessed:
  Mar. 10, 2026. [Online]. Available:
  \url{https://docs.mosek.com/latest/matlabapi/index.html}
\BIBentrySTDinterwordspacing

\bibitem{ref:YALMIP}
J.~L{\"{o}}fberg, ``{YALMIP}: A toolbox for modeling and optimization in
  {MATLAB},'' in \emph{Proc. 13th IEEE Conf. Comput.-Aided Control Syst. Des.},
  2004.

\end{thebibliography}

\begin{IEEEbiography}[{\includegraphics[width=1in,height=1.25in,clip,keepaspectratio]{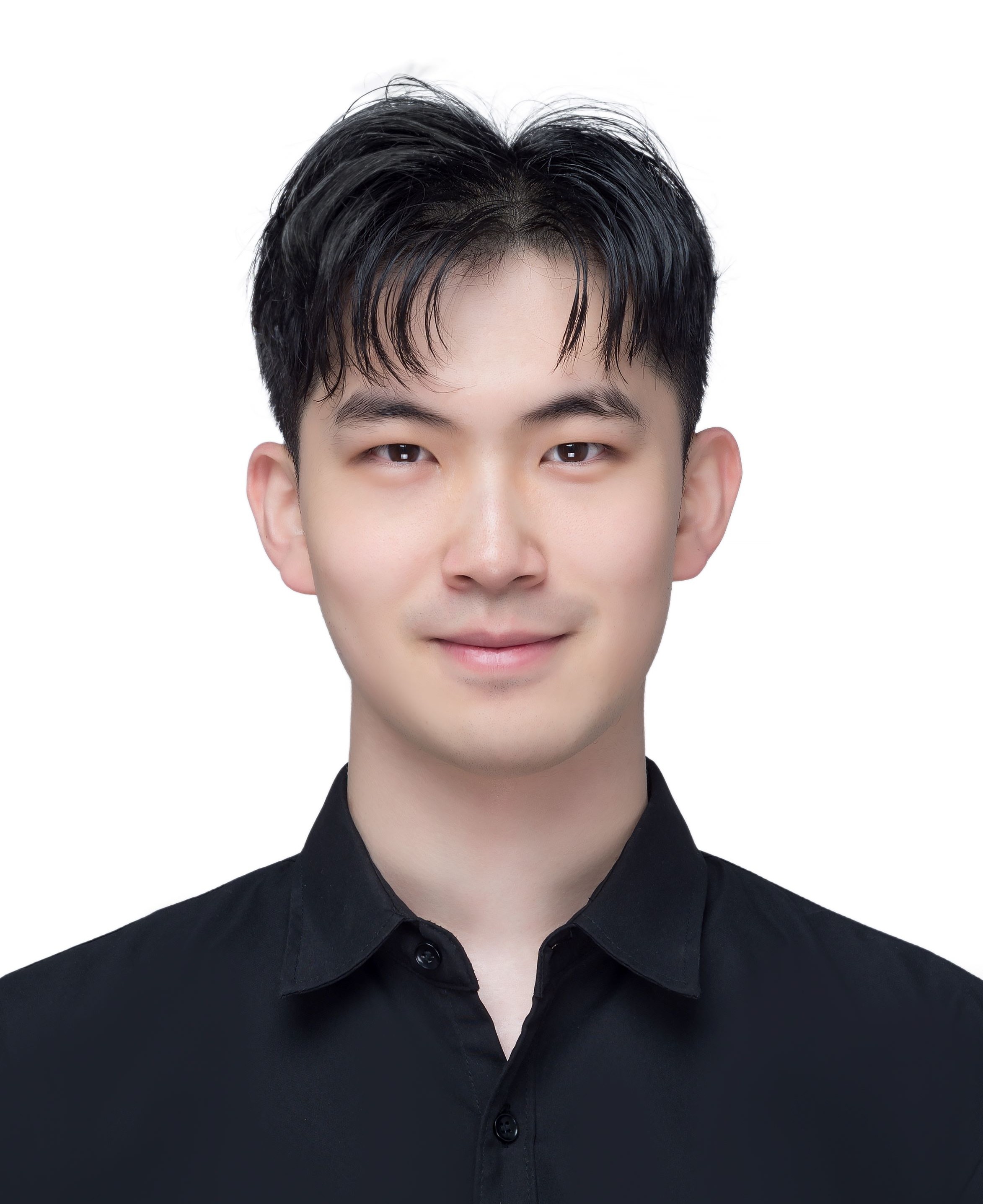}}]{Zirui Niu} (Graduate Student Member, IEEE)
was born in Shandong, China. He received the B.Eng. (Hons) degree in electrical and electronic engineering from the University of Liverpool, UK, in 2020, and the M.Sc. degree in control systems from Imperial College London, UK, in 2021. Since 2022, he has been pursuing a Ph.D. in control theory at Imperial College London, UK, supported by the CSC-Imperial Scholarship. His research interests include output regulation, hybrid systems, model reduction, approximate simulation of complex systems, and data-driven control. He was the recipient of the MSc Control Systems Outstanding Achievement Prize (2021) and the Hertha Ayrton Centenary Prize (2021) for outstanding performance in MSc Control Systems and the outstanding master’s thesis in the Electrical and Electronic Engineering Department at Imperial College London, respectively.
\end{IEEEbiography}

\begin{IEEEbiography}
[{\includegraphics[width=1in,height=1.25in,clip,keepaspectratio]{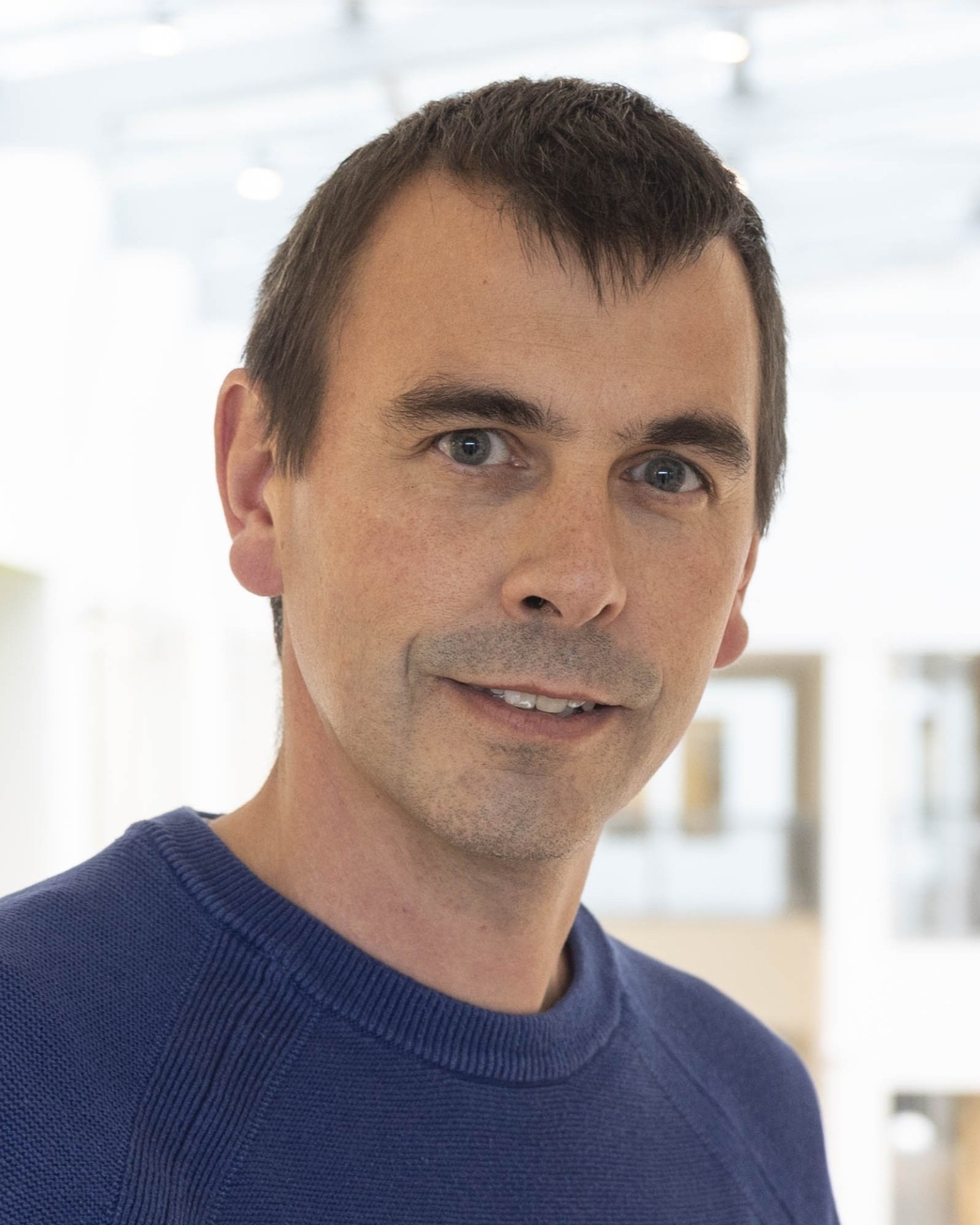}}]
{Antoine Girard} (Fellow, IEEE) is a Senior Researcher at CNRS and a member of the Laboratory of Signals and Systems. He is also an Adjunct Professor at CentraleSupélec, Université Paris-Saclay. He received the Ph.D. degree from Grenoble Institute of Technology, in 2004. From 2004 to 2006, he held postdoctoral positions at University of Pennsylvania and Université Grenoble-Alpes. From 2006 to 2015, he was an Assistant/Associate Professor at the Université Grenoble-Alpes. His main research interests deal with analysis and control of hybrid systems with an emphasis on computational approaches, formal methods and applications to cyber-physical and autonomous systems. Antoine Girard is an IEEE Fellow. In 2015, he was appointed as a junior member of the Institut Universitaire de France (IUF). In 2016, he was awarded an ERC Consolidator Grant. He received the George S. Axelby Outstanding Paper Award from the IEEE Control Systems Society in 2009, the CNRS Bronze Medal in 2014, and the European Control Award in 2018.
\end{IEEEbiography}

\begin{IEEEbiography}[{\includegraphics[width=1in,height=1.25in,clip,keepaspectratio]{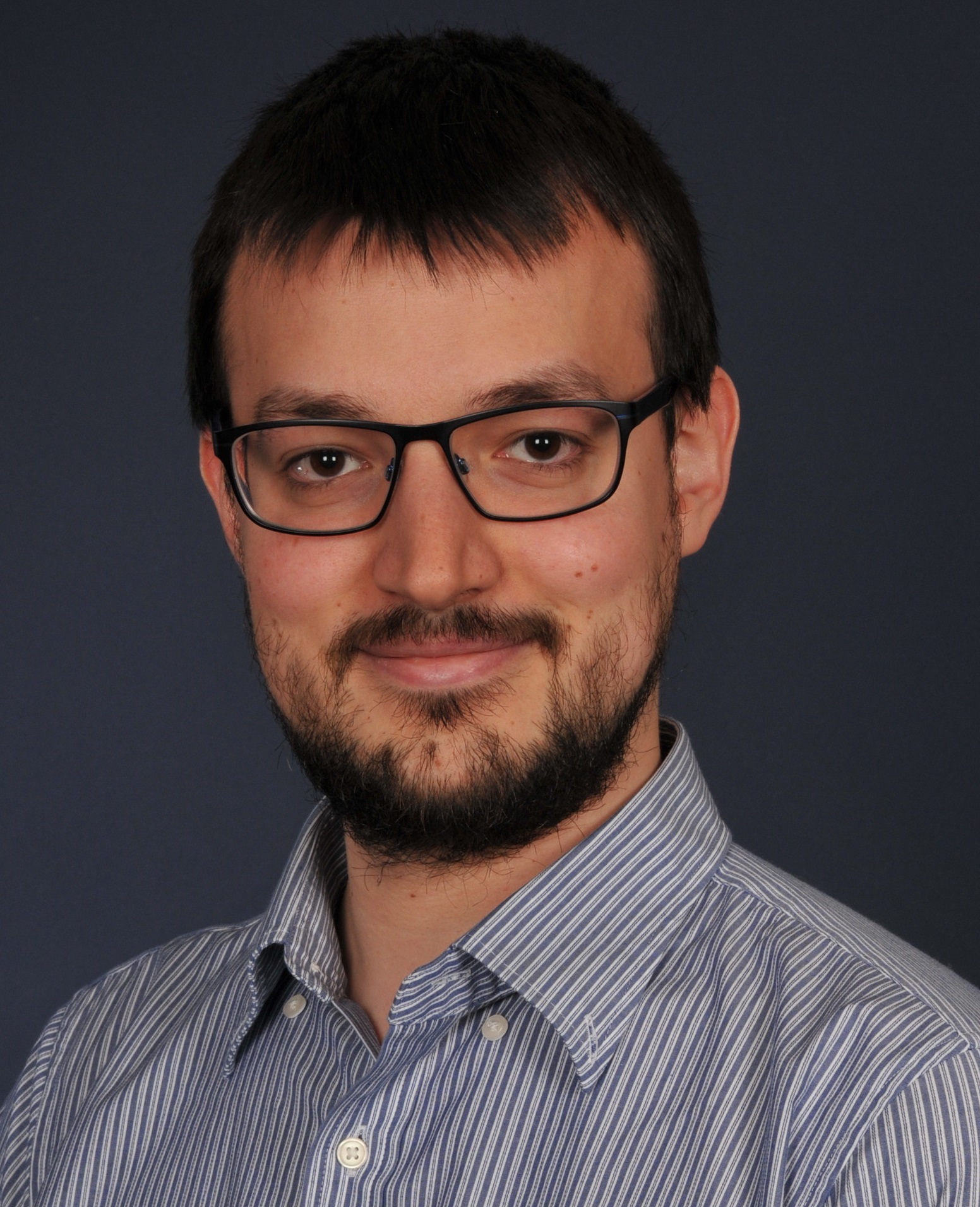}}] {Giordano Scarciotti} (Senior Member, IEEE) received his B.Sc. and M.Sc. degrees in Automation Engineering from the University of Rome ``Tor Vergata'', Italy, in 2010 and 2012, respectively. In 2012 he joined the Control and Power Group, Imperial College London, UK, where he obtained a Ph.D. degree in 2016. He also received an M.Sc. in Applied Mathematics from Imperial in 2020. He is currently an Associate Professor in Control Theory at Imperial. He was a visiting scholar at New York University in 2015, at University of California Santa Barbara in 2016, and a Visiting Fellow of Shanghai University in 2021-2022. He is the recipient of an Imperial College Junior Research Fellowship (2016), of the IET Control \& Automation PhD Award (2016), the Eryl Cadwaladr Davies Prize (2017), an ItalyMadeMe award (2017) and the IEEE Transactions on Control Systems Technology Outstanding Paper Award (2023). He is a member of the EUCA Conference Editorial Board, of the IFAC and IEEE CSS Technical Committees on Nonlinear Control Systems and has served in the International Programme Committees of multiple conferences. He is Associate Editor of Automatica and Guest Associate Editor of Nonlinear Analysis: Hybrid Systems. He was the National Organising Committee Chair for the EUCA European Control Conference (ECC) 2022, and of the 7th IFAC Conference on Analysis and Control of Nonlinear Dynamics and Chaos 2024, and the Invited Session Chair and Editor for the IFAC Symposium on Nonlinear Control Systems 2022 and 2025, respectively. He is the General Co-Chair of ECC 2029. 
\end{IEEEbiography}

\end{document}